\newcommand{\mbs}[1]{\pmb{#1}}
\newcommand{\vect}[1]{{\lowercase{\mbs{#1}}}}
\newcommand{\mat}[1]{{\uppercase{\mbs{#1}}}}
\newcommand{\T}{{\scriptscriptstyle\mathsf{T}}}
\renewcommand{\H}{{\scriptscriptstyle\mathsf{H}}}
\renewcommand{\Re}[1][]{\ifthenelse{\isempty{#1}}{\operatorname{Re}}{\operatorname{Re}\left(#1\right)}}
\renewcommand{\Im}[1][]{\ifthenelse{\isempty{#1}}{\operatorname{Im}}{\operatorname{Im}\left(#1\right)}}
\newcommand{\SNR}{\mathsf{snr}}
\newcommand{\dv}{\vect{d}}
\newcommand{\ev}{\vect{e}}
\newcommand{\iv}{\vect{i}}
\newcommand{\rv}{\vect{r}}
\newcommand{\sv}{\vect{s}}
\newcommand{\uv}{\vect{u}}
\newcommand{\vv}{\vect{v}}
\newcommand{\xv}{\vect{x}}
\newcommand{\yv}{\vect{y}}
\newcommand{\zv}{\vect{z}}
\newcommand{\muv}{\vect{\mu}}
\newcommand{\alphav}{\vect{\alpha}}
\newcommand{\betav}{\vect{\beta}}
\newcommand{\gammav}{\vect{\gamma}}
\newcommand{\phiv}{\vect{\phi}}
\newcommand{\Sigmam}{\pmb{\Sigma}}
\newcommand{\Lambdam}{\pmb{\Lambda}}
\newcommand{\Xim}{\pmb{\Xi}}
\newcommand{\Cm}{\mat{c}}
\newcommand{\Mm}{\mat{M}}
\newcommand{\Qm}{\mat{q}}
\newcommand{\Rm}{\mat{r}}
\newcommand{\Sm}{\mat{s}}
\newcommand{\Um}{\mat{u}}
\newcommand{\Vm}{\mat{V}}
\newcommand{\Ym}{\mat{y}}
\newcommand{\Dc}{{\mathcal D}}
\newcommand{\Nc}{{\mathcal N}}
\newcommand{\Pc}{{\mathcal P}}
\newcommand{\Sc}{{\mathcal S}}
\newcommand{\Tc}{{\mathcal T}}
\newcommand{\Vc}{{\mathcal V}}
\newcommand{\Xc}{{\mathcal X}}
\newcommand{\CC}{\mathbb{C}}
\newcommand{\Id}{\mat{\mathrm{I}}}
\newcommand{\CN}[1][]{\ifthenelse{\isempty{#1}}{\mathcal{N}_{\mathbb{C}}}{\mathcal{N}_{\mathbb{C}}\left(#1\right)}}
\renewcommand{\P}[1][]{\ifthenelse{\isempty{#1}}{\mathbb{P}}{\mathbb{P}\left(#1\right)}}
\newcommand{\E}[1][]{\ifthenelse{\isempty{#1}}{\mathbb{E}}{\mathbb{E}\left[#1\right]}}
\newcommand{\I}[1][]{\ifthenelse{\isempty{#1}}{\mathbb{I}}{\mathbb{I}\left\{#1\right\}}}
\renewcommand{\det}[1][]{\ifthenelse{\isempty{#1}}{\mathrm{det}}{{\rm det}\left(#1\right)}}
\newcommand{\trace}[1][]{\ifthenelse{\isempty{#1}}{{\rm tr}}{{\rm tr}\left\{#1\right\}}}
\newcommand{\rank}[1][]{\ifthenelse{\isempty{#1}}{\mathrm{rank}}{{\rm rank}\left(#1\right)}}
\newcommand{\diag}[1][]{\ifthenelse{\isempty{#1}}{\mathrm{diag}}{{\rm diag}\left(#1\right)}}
\newcommand{\Cov}[1][]{\ifthenelse{\isempty{#1}}{\mathsf{Cov}}{\mathsf{Cov}\left(#1\right)}}
\newcommand{\defeq}{\triangleq}
\newcommand{\eqdef}{\triangleq}
\newtheorem{proposition}{Proposition}
\newtheorem{remark}{Remark}
\newtheorem{lemma}{Lemma}
\newcounter{enumi_saved}
\pgfplotsset{minor grid style={dotted}}
\pgfplotsset{major grid style={dashed}}
\pgfplotsset{compat=newest}
\renewcommand{\rv}[1]{{\mathsf{#1}}}
\newcommand{\rvVec}[1]{{\pmb{\mathsf{#1}}}}
\newcommand{\rvMat}[1]{{\pmb{\mathsf{#1}}}}
\newcommand{\const}{c_0}
\newcommand{\of}[1]{^{(#1)}}
\newcommand{\ofH}[1]{^{(#1)\H}}
\newcommand{\ofT}[1]{^{(#1)\T}}
 \newcommand*\dif{\mathop{} \mathrm{d}}
 \newcommand{\vectr}[1]{{\rm vec} (#1)}
\renewcommand{\defeq}{:=}
\renewcommand{\eqdef}{=:}
\renewcommand{\SNR}{{\rm SNR}}
\renewcommand{\Id}{\mat{I}}
\newcommand{\LLR}{{\rm LLR}}
\newcommand{\m}[2]{m_{#1,#2}}
\newcommand{\gam}[2]{\gammav_{#1,#2}}
\newcommand{\new}{^{{\rm new}}}
\newcommand{\lnew}{_{{\rm new}}}
\newcommand{\Nalpha}{\mathfrak{N}_\alpha}
\newcommand{\Nbeta}{\mathfrak{N}_\beta}
\newcommand{\GMI}{R_{\rm GMI}}
\newcommand{\uvec}[1]{\ensuremath{\underline{\boldsymbol{#1}}}}
\newcommand{\revise}[1]{{#1}}
\begin{document}
	
\title{Multi-User Detection Based on Expectation Propagation for the Non-Coherent SIMO Multiple Access Channel}

\author{	
	Khac-Hoang Ngo,~\IEEEmembership{Student Member,~IEEE,} Maxime Guillaud,~\IEEEmembership{Senior Member,~IEEE,} \\ Alexis Decurninge,~\IEEEmembership{Member,~IEEE,} Sheng Yang,~\IEEEmembership{Member,~IEEE,} Philip Schniter,~\IEEEmembership{Fellow,~IEEE}%
	\thanks{Khac-Hoang Ngo is with Mathematical and Algorithmic Sciences Laboratory, Paris Research Center, Huawei Technologies, 92100 Boulogne-Billancourt, France, and also with Laboratory of Signals and Systems, CentraleSup\'elec, University of Paris-Saclay, 91190 Gif-sur-Yvette, France~(e-mail: \texttt{ngo.khac.hoang@huawei.com}).
	}%
	\thanks{Maxime Guillaud and Alexis Decurninge are with Mathematical and Algorithmic Sciences Laboratory, Paris Research Center, Huawei Technologies, 92100 Boulogne-Billancourt, France~(e-mail: \texttt{\{maxime.guillaud,alexis.decurninge\}@huawei.com}).
	}%
	\thanks{Sheng Yang is with Laboratory of Signals and Systems, CentraleSup\'elec, University of Paris-Saclay, 91190 Gif-sur-Yvette, France~(e-mail: \texttt{sheng.yang@centralesupelec.fr}).
	}%
	\thanks{Philip Schniter is with Department of Electrical and Computer Engineering, The Ohio State University, Columbus, OH 43210, USA~(e-mail: \texttt{schniter.1@osu.edu}).
	}%
	\thanks{This paper was presented in part at the 53rd Asilomar Conference on Signals,
		Systems, and Computers, Pacific Grove, CA, USA, 2019~\cite{HoangAsilomar2019_EP}.}%
}


\maketitle

\begin{abstract} 
	We consider the non-coherent single-input multiple-output~(SIMO) multiple access channel with \revise{general} signaling under \revise{spatially correlated} Rayleigh block fading. We propose a novel soft-output multi-user detector that computes an approximate marginal posterior of each transmitted signal using only the knowledge about the channel distribution. Our detector is based on expectation propagation~(EP) approximate inference and has polynomial complexity in the number of users, \revise{number of receive antennas and channel coherence time}. We also propose two simplifications of this detector with reduced complexity. 
	\revise{With Grassmannian signaling, the proposed detectors outperform a state-of-the-art non-coherent detector with projection-based interference \revise{mitigation}. With pilot-assisted signaling, the EP detector outperforms, in terms of symbol error rate, some conventional coherent pilot-based detectors, including a sphere decoder and a joint channel estimation--data detection scheme.} Our EP-based detectors produce accurate approximates of the true posterior leading to high achievable sum-rates. The gains of these detectors are further observed in terms of the bit error rate when using their soft outputs for a turbo channel decoder. 
\end{abstract}
 \begin{IEEEkeywords}
 	non-coherent communications, multiple access, detection, expectation propagation, Grassmannian constellations
 \end{IEEEkeywords}

\section{Introduction} \label{sec:intro}
In wireless communications, multi-antenna based multiple-input multiple-output (MIMO) technology is capable of improving significantly both the system spectral efficiency and reliability due to its multiplexing and diversity gains~\cite{Telatar1999capacityMIMO,Foschini}. MIMO is at the heart of current cellular systems, and large-scale (massive) MIMO~\cite{Emil2017_massivemimobook} 
is considered as one of the fundamental technologies for the fifth-generation~(5G) wireless communications~\cite{Larsson2017massiveMIMO_5G}. 
In practical MIMO systems, the transmitted symbols are normally drawn from a finite discrete constellation to reduce complexity. Due to propagation effects, the symbols sent from different transmit antennas interfere, and the receiver observes a linear superposition of these symbols corrupted by noise. 
The task of the receiver is to detect these symbols (or rather the underlying bits) based on the received signal and the available knowledge about the channel. 

If the {\em instantaneous} value of the channel matrix is treated as known, the detection is said to be {\em coherent} and has been investigated extensively in the literature~\cite{Hanzo2015_50years}. In this case, the data symbols are normally taken from a scalar constellation such as the quadrature amplitude modulation~(QAM). 
Since the optimal maximum-likelihood~(ML) coherent detection problem is known to be non-deterministic polynomial-time hard (NP-hard) 
\cite{Verdu1989computational}, 
many sub-optimal coherent MIMO detection algorithms have been proposed. These range from linear schemes, such as the zero forcing~(ZF) and minimum mean square error~(MMSE) detectors, to non-linear schemes based on, for example, interference cancellation, tree search, and lattice reduction~\cite{Hanzo2015_50years}.  

If only {\em statistical} information about the channel is available, the detection problem is said to be {\em non-coherent}. \revise{In the block fading channel where the channel matrix remains constant for each length-$T$ coherence block and varies between blocks, the receiver can estimate (normally imperfectly) the channel based on the transmitted pilot symbols, then perform coherent detection based on the channel estimate. Channel estimation and data detection can also be done iteratively~\cite{Buzzi2004_iterative_dataDetection-channelEstimation, Hanzo2008semiblind}, or jointly based on tree search~\cite{Xu2008_exactML,Alshamary2016efficient}. These schemes requires pilot transmission for an initial channel estimate or to guarantee the identifiability of the data symbols.  
Another approach not involving pilot transmission is unitary space time modulation, in which the matrix of symbols in the space-time domain is orthonormal and isotropically distributed~\cite{Hochwald2000unitaryspacetime}.} 
There, information is carried by the signal matrix subspace position, which is invariant to multiplication by the channel matrix. Therefore, a constellation over matrix-valued symbols can be designed as a collection of subspaces in $\CC^T$. Such constellations belong to the Grassmann manifold $G(\CC^T,K)$, which is the space of $K$-dimensional subspaces in $\CC^T$, where $K$ is the number of transmit antennas. For the independent and identically distributed~(i.i.d.) Rayleigh block fading channel, when the signal-to-noise-ratio~(SNR) is large, Grassmannian signaling was shown to achieve a rate within a vanishing gap from the capacity if $T\ge N + \min\{K,N\}$~\cite{ZhengTse2002Grassman}, and within a constant gap if $2K\le T \le N + K$~\cite{Yang2013CapacityLargeMIMO}, where $N$ is the number of receive antennas. 
Like with coherent detection, the optimal ML non-coherent detection problem under Grassmannian signaling is NP-hard. Thus, low-complexity sub-optimal detectors have been proposed for constellations with additional structure, e.g.,~\cite{Hochwald2000systematicDesignUSTM,Kammoun2007noncoherentCodes,Hoang_cubesplit_journal}. 

In this paper, we focus on the non-coherent detection problem in the Rayleigh flat and block fading single-input multiple-output~(SIMO) multiple-access channel~(MAC) with coherence time $T$. There, the communication signals are independently transmitted from $K$ single-antenna users. If the users could cooperate, the high-SNR optimal joint signaling scheme would be a Grassmannian signaling on $G(\CC^T,K)$~\cite{ZhengTse2002Grassman}. However, we assume uncoordinated users, for which the optimal non-coherent transmission scheme is not known, \revise{although some approximate optimality design criteria have been proposed in~\cite{Hoang2020const_MAC}. 
In this work, we design the detector without assuming any specific structure of the signal transmitted over a coherence block.}
We consider the case where the receiver is interested not only in the hard detection of the symbols but also in their posterior marginal probability mass functions~(\revise{PMFs}). This ``soft'' information is needed, for example, when computing the bit-wise log-likelihood ratios~(LLRs) required for soft-input soft-output channel decoding. Computing an exact marginal \revise{PMF} would require enumerating all possible combinations of other-user signals, which is infeasible with many users, many antennas, or large constellations. Thus, we seek sub-optimal schemes with practical complexity. 

In contrast to probabilistic coherent MIMO detection, for which many schemes have been proposed (e.g.,~\cite{Caire2004iterative,Goldberger2011GTA,Cespedes2018MIMOwithEC}), the probabilistic non-coherent MIMO detection under \revise{general signaling, and Grassmannian signaling in particular,} has not been well investigated. The detection scheme proposed in~\cite{HoangAsilomar2018multipleAccess} is sub-optimal and compatible only with the specific constellation structure considered therein. The list-based soft demapper in \cite{Davidson2009BICM_IDD} reduces the number of terms considered in posterior marginalization by including only those symbols at a certain distance from a reference point. However, it was designed for the single-user case only and has no obvious generalization to the MAC. 
\revise{The semi-blind approaches~\cite{Buzzi2004_iterative_dataDetection-channelEstimation, Hanzo2008semiblind,Xu2008_exactML,Alshamary2016efficient} 
for the MIMO point-to-point channel can be extended to the MAC. However, these schemes are restricted to transmitted signals with pilots.}

In this work, we propose message-passing algorithms for posterior marginal inference of non-coherent multi-user MIMO transmissions over \revise{spatially correlated} Rayleigh block fading channels. Our algorithms are based on expectation propagation~(EP) approximate inference~\cite{Minka:Diss:01,Heskes:JSM:05}. EP provides an iterative framework for approximating posterior beliefs by parametric distributions in the exponential family~\cite[Sec.~1.6]{Doksum1977mathStatistics_ideas_and_topics}. Although there are many possible ways to apply EP to our non-coherent multi-user detection problem, we do so by \revise{choosing as variable nodes} the indices of the transmitted symbols and the noiseless received signal from each user. 
The EP algorithm passes messages between the corresponding variable nodes and factor nodes on a bipartite factor graph. In doing so, the approximate posteriors of these variables are iteratively refined. 
We also address numerical implementation issues.

To measure the accuracy of the approximate posterior generated by the soft detectors, we compute the mismatched sum-rate of the system that uses the approximate posterior as the decoding metric. This mismatched sum-rate approaches the achievable rate of the system as the approximate posterior gets close to the true posterior. 
We also evaluate the symbol error rate when using the proposed schemes for hard detection, and the bit error rate when using these schemes for turbo equalization with a standard turbo code.

The contributions of this work are summarized as follows:
\begin{enumerate}[leftmargin=*]
	\item We propose soft and hard multi-user detectors for the non-coherent SIMO MAC using EP approximate inference, and methods to stabilize the EP updates. \revise{The proposed detectors work for general vector-valued transmitted symbols within each channel coherence block, i.e., it is general enough to include both the pilot-assisted and pilot-free signaling cases.}
	
	\item We propose two simplifications of the EP detector with reduced complexity. 
	The first one, so-called EPAK, is based on approximating the EP messages with Kronecker products. The second one can be interpreted as soft MMSE estimation and successive interference \revise{approximation}~(\revise{SIA}).
	
	\item We analyze the complexity and numerically evaluate the \revise{convergence, running time}, and performance of the proposed EP, EPAK, and MMSE-\revise{SIA} detectors, the optimal ML detector, a genie-aided detector, the state-of-the-art detector from~\cite{HoangAsilomar2018multipleAccess}, and \revise{some conventional coherent pilot-based schemes}. 
	Our results suggest that the proposed detectors offer significantly improved mismatched sum-rate, symbol error rate, and coded bit error rate with respect to (w.r.t.) some existing sub-optimal schemes, while having lower complexity than the ML detector. 
\end{enumerate} 
To the best of our knowledge, our proposed approach is the first message-passing scheme for non-coherent multi-user MIMO detection \revise{with general constellations}.

The remainder of this paper is organized as follows. The system model is presented in Section~\ref{sec:model}.  A brief review of EP is presented in Section~\ref{sec:EP}, and the EP approach to non-coherent detection is presented in Section~\ref{sec:EP_application}. In Section~\ref{sec:MMSE-SIA_EPAK}, two simplifications (MMSE-\revise{SIA} and EPAK) of the EP detector are presented. Implementation aspects of EP, MMSE-\revise{SIA}, and EPAK are discussed in Section~\ref{sec:implementation}. Numerical results and conclusions are presented in Section~\ref{sec:performance} and Section~\ref{sec:conclusion}, respectively. The mathematical preliminaries and proofs are provided in the appendices.

{\it Notations:}
Random quantities are denoted with non-italic letters with sans-serif fonts, e.g., a scalar $\rv{x}$, a vector $\rvVec{v}$, and a matrix $\rvMat{M}$. Deterministic quantities are denoted 
with italic letters, e.g., a scalar $x$, a vector $\pmb{v}$, and a
matrix $\pmb{M}$. 
The Euclidean norm is denoted by $\|\vv\|$ and the Frobenius norm $\|\Mm\|_F$. 
\revise{The conjugate, 
transpose, conjugate transpose, trace, and vectorization of $\Mm$ are denoted by $\Mm^*$, 
$\Mm^\T$, $\Mm^\H$, $\trace\{\Mm\}$, and $\vectr{\Mm}$, respectively.} 
$\prod$ denotes the conventional or Cartesian product, depending on the factors. $\otimes$ denotes the Kronecker product.
$\mathbbm{1}\{A\}$ denotes the indicator function whose value is 1 if $A$ is true and 0 otherwise. 
$[n] \defeq \{1,2,\dots,n\}$. $\propto$ means ``proportional to''. 
The Grassmann manifold $G(\mathbb{C}^T,K)$ is the space of $K$-dimensional subspaces in $\mathbb{C}^T$. 
In particular, $G(\mathbb{C}^T,1)$ is the Grassmannian of lines. 
The Kullback-Leibler divergence of a distribution $p$ from another distribution $q$ of a random vector $\rvVec{x}$ with domain $\Xc$ is defined by $D(q\|p) \defeq \int_{\Xc} q(\xv) \log \frac{q(\xv)}{p(\xv)} \dif \xv$ if $\Xc$ is continuous and $D(q\|p) \defeq \sum_{\xv\in \Xc} q(\xv) \log \frac{q(\xv)}{p(\xv)}$ if $\Xc$ is discrete. 
$\Nc(\muv,\Sigmam)$ denotes the {\em complex} Gaussian vector distribution with mean $\muv$, covariance matrix $\Sigmam$, and thus probability density function~(PDF)
\begin{align}
\Nc(\xv; \muv,\Sigmam) \defeq \frac{\exp\big(-(\xv-\muv)^\H \Sigmam^{-1}(\xv-\muv)\big)}{\pi^n \det(\Sigmam)} , ~ \xv \in \CC^n.
\end{align}

\section{System Model} \label{sec:model}
\subsection{Channel Model} 
We consider a SIMO MAC in which $K$ single-antenna users transmit to an $N$-antenna receiver. We assume that the channel is flat and block fading with an equal-length and synchronous (across the users) coherence interval of $T$ channel uses. That is, the channel vectors $\rvVec{h}_k \in \CC^{N\times 1}$, which contain the fading coefficients between the transmit antenna of user $k\in [K]$ and the $N$ receive antennas, remain constant within each coherence block of $T$ channel uses and change independently between blocks. 
Furthermore, the {\em distribution} of $\rvVec{h}_k$ is assumed to be known to the receiver, but its {\em realizations} are  unknown to both ends of the channel. 
\revise{Since the users are not co-located, we assume that the $\rvVec{h}_k$ are independent across users. We consider Rayleigh fading with receiver-side correlation, i.e., $\rvVec{h}_k \sim \Nc(\mathbf{0},\Xim_k)$, where $\Xim_k \in \CC^{N\times N}$ is the spatial correlation matrix. We assume that $\frac{1}{N}\trace[\Xim_k] \eqdef \xi_k$ where $\xi_k$ is the large-scale average channel gain from one of the receive antennas to user $k$}. 
We assume that $T >  K$ and $N \ge K$. 

Within a coherence block, each transmitter $k$ sends a signal vector $\rvVec{s}_k \in \CC^T$, and the receiver receives a realization $\Ym$ of the random matrix
\begin{align}
\rvMat{Y} &= \sum_{k=1}^{K} \rvVec{s}_k \rvVec{h}_k^\T + \rvMat{W} = \rvMat{S}\rvMat{H}^\T + \rvMat{W}, \label{eq:channel_model}
\end{align}
where $\rvMat{S} = [\rvVec{s}_1 \ \dots \ \rvVec{s}_K] \in \CC^{T\times K}$  and $\rvMat{H} = [\rvVec{h}_1 \ \dots \ \rvVec{h}_K] \in \CC^{N\times K}$ concatenate the transmitted signals and channel vectors, respectively, $\rvMat{W} \in \CC^{T\times N}$ is the Gaussian noise with i.i.d.~$\Nc(0,\sigma^2)$ entries independent of $\rvMat{H}$, and  the block index is omitted for simplicity. 

We assume that the transmitted signals have average unit norm, i.e., $\E[\|\rvVec{s}_k\|^2] = 1, k \in [K]$.  Under this normalization, the signal-to-noise ratio~(SNR) of the transmitted signal from user $k$ at each receive antenna is $\SNR_k = \revise{\xi_k}/(T\sigma^2)$. 
We assume that the transmitted signals belong to {\em disjoint} finite discrete individual \revise{constellations with vector-valued symbols. That is, 
$
\rvVec{s}_k \in \Sc_k \defeq \{\sv_k\of{1},\dots,\sv_{k}\of{|\Sc_k|}\}, ~ k \in [K].
$
In particular, $\Sc_k$ can be a Grassmannian constellation on $G(\CC^T,1)$, i.e., each constellation symbol $\sv_k\of{i}$ is a unit-norm vector representative of a point in $G(\CC^T,1)$. Another example is when the constellation symbols contain pilots and scalar data symbols.\footnote{\revise{In this case, the constellations are disjoint thanks to the fact that pilot sequences are user-specific.}}
 Each symbol in $\Sc_k$ is labeled with a binary sequence of length $B_k \defeq \log_2|\Sc_k|$.} 

\subsection{Multi-User Detection Problem}
\revise{Given $\rvMat{S} = \Sm = [\sv_1, \ \sv_2, \ \dots, \ \sv_K]$, 
the conditional probability density $p_{\rvMat{Y} | \rvMat{S}}$, also known as likelihood function, is derived similar to~\cite[Eq.(9)]{Jafar2005correlatedFading} as
\begin{multline}
p_{\rvMat{Y} | \rvMat{S}}(\Ym | \Sm) = \\ \frac{\exp\big(-\vectr{\Ym^\T}^\H\big(\sigma^2\Id_{NT}+\sum_{k=1}^{K} \sv_k\sv_k^\H \otimes \Xim_k \big)^{-1}\vectr{\Ym^\T} \big)}{\pi^{NT}\det(\sigma^2\Id_{NT}+\sum_{k=1}^{K}\sv_k\sv_k^\H \otimes \Xim_k)}.
\label{eq:likelihood}
\end{multline}
}
Given the received signal $\rvMat{Y} = \Ym$, the joint multi-user ML symbol decoder is then 
\revise{
\begin{align} 
\hat{\Sm} 
&= \arg \! \min_{\Sm \in \prod_{k=1}^{K}\Sc_k} 
\!\bigg(\!\vectr{\Ym^\T}^\H\Big(\sigma^2\Id_{NT}\!+\!\sum_{k=1}^{K}\sv_k\sv_k^\H \otimes \Xim_k\Big)^{-1}\!\!\vectr{\Ym^\T} \bigg. \notag \\
&\bigg.\qquad+ \log\det\Big(\sigma^2\Id_{NT}+\sum_{k=1}^{K}\sv_k\sv_k^\H \otimes \Xim_k\Big)\bigg). \label{eq:MLdecoder} 
\end{align}
Since the ML decoding metric depends on $\Sm$ only through $\sum_{k=1}^{K}\sv_k\sv_k^\H \otimes \Xim_k$, for identifiability, it must hold that $\sum_{k=1}^{K}\sv_k\sv_k^\H \otimes \Xim_k \ne \sum_{k=1}^{K}\sv'_k{\sv'_k}^\H \otimes \Xim_k$ for any pair of distinct joint symbols $\Sm = [\sv_1, \dots, \sv_K]$ and $\Sm'= [\sv'_1, \dots, \sv'_K]$ in $\prod_{k=1}^{K}\Sc_k$. 
}

When a channel code is used, most channel decoders require the LLRs of the bits. \revise{The LLR of the $j$-th bit of user $k$, denoted by $\rv{b}_{k,j}$, given the observation $\rvMat{Y}=\Ym$ is defined as 
\begin{align}
\LLR_{k,j}(\Ym) &\defeq \log \frac{p_{\rvMat{Y}|\rv{b}_{k,j}}(\Ym|1)}{p_{\rvMat{Y}|\rv{b}_{k,j}}(\Ym|0)} 
\\
&= \log \frac{\sum_{\alphav \in \Sc_{k,j}^{(1)}}p_{\rvMat{Y}|\rvVec{s}_k}(\Ym|\alphav)}{\sum_{\betav \in \Sc_{k,j}^{(0)}}p_{\rvMat{Y}|\rvVec{s}_k}(\Ym|\betav)} \\
&= \log \frac{\sum_{\alphav \in \Sc_{k,j}^{(1)}}p_{\rvVec{s}_k| \rvMat{Y}}(\alphav|\Ym)}{\sum_{\betav \in \Sc_{k,j}^{(0)}}p_{\rvVec{s}_k|\rvMat{Y}}(\betav | \Ym)} 
\label{eq:LLR}
\end{align}
where $\Sc_{k,j}^{(b)}$ denotes the set of all possible symbols in $\Sc_k$ with the $j$-th bit being equal to $b$ 
for $j \in [B_k]$ and $b \in \{0,1\}$. To compute \eqref{eq:LLR}, the posteriors $p_{\rvVec{s}_k|\rvMat{Y}}$, $k\in [K]$, are marginalized from}
\begin{align}
p_{\rvMat{S} | \rvMat{Y}}(\Sm | \Ym) = \frac{p_{\rvMat{Y}|\rvMat{S}}(\Ym|\Sm) p_{\rvMat{S}}(\Sm)}{p_{\rvMat{Y}}(\Ym)} \propto p_{\rvMat{Y}|\rvMat{S}}(\Ym|\Sm) p_{\rvMat{S}}(\Sm). \label{eq:posterior}
\end{align}
Assuming that the transmitted signals are independent and uniformly distributed over the respective constellations, the prior $p_{\rvMat{S}}$ factorizes as 
\begin{align}
\Pr({\rvMat{S}} = [\sv_1,\dots,\sv_K]) = 
 \prod_{k=1}^{K}\frac{1}{|\Sc_k|} \mathbbm{1}\{\sv_k\in \Sc_k\}.
\end{align}
On the other hand, the likelihood function $p_{\rvMat{Y}|\rvMat{S}}(\Ym|[\sv_1,\dots,\sv_K])$ involves all $\sv_1,\dots,\sv_K$ in such a manner that it does not straightforwardly factorize. Exact marginalization of $p_{\rvMat{S}|\rvMat{Y}}$ requires computing
\begin{align} \label{eq:exact_marginalization}
p_{\rvVec{s}_k| \rvMat{Y}}(\sv_k|\Ym) = \sum_{\sv_l \in \Sc_l, \forall l\ne k} p_{\rvMat{S}|\rvMat{Y}}([\sv_1,\dots,\sv_K] | \Ym), \quad k \in [K]. \quad
\end{align}
That is, it requires computing $p_{\rvMat{Y} | \rvMat{S}}(\Ym | \Sm)$ (which requires the inversion of an \revise{$NT\times NT$} matrix) for all $\Sm \in \prod_{k=1}^{K}\Sc_k$. Thus, the total complexity of exact marginalization is $O(\revise{K^6} 2^{KB})$.\footnote{Throughout the paper, as far as the complexity analysis is concerned, we assume for notational simplicity that  $T = O(K)$, $N = O(K)$, and $|\Sc_k| = O(2^B), ~\forall k\in[K]$. \revise{If the channels are uncorrelated ($\Xim_k = \Id_N$), the likelihood function can be simplified as $p_{\rvMat{Y} | \rvMat{S}}(\Ym | \Sm) = \frac{\exp\left(-\trace\left\{\Ym^\H(\sigma^2\Id_T+\Sm\Sm^\H)^{-1}\Ym \right\}\right)}{\pi^{NT}\det^N(\sigma^2\Id_T+\Sm\Sm^\H)}$. Thus, the complexity of exact marginalization is reduced to $O(K^3 2^{KB})$.}}
This is formidable for many users or large constellations. Thus, we seek alternative approaches to estimate
\begin{align}
p_{\rvMat{S}|\rvMat{Y}}([\sv_1,\dots,\sv_K] | \Ym) &\approx \hat{p}_{\rvMat{S}|\rvMat{Y}}([\sv_1,\dots,\sv_K]| \Ym) \\
&= \prod_{k=1}^{K} \hat{p}_{\rvVec{s}_k | \rvMat{Y}}(\sv_k | \Ym). \label{eq:post_approx}
\end{align} 

\subsection{Achievable Rate}
According to \cite[Sec.~II]{Ganti2000mismatchedDecoding}, the highest sum-rate reliably achievable with a given decoding metric $\hat{p}_{\rvMat{S}|\rvMat{Y}}$, so-called the mismatched sum-rate, is lower bounded by the generalized mutual information~(GMI) given by
\begin{align}
	&R_{\rm GMI} \notag \\&= \frac{1}{T}\sup_{s\ge 0} \E\bigg[\log_2 \frac{{\hat{p}_{\rvMat{S}|\rvMat{Y}}}(\rvMat{S}|\rvMat{Y})^s}{\sum_{\Sm' \in \prod_{k=1}^K \Sc_k} \Pr({\rvMat{S}} = \Sm') {\hat{p}_{\rvMat{S}|\rvMat{Y}}(\Sm' | \rvMat{Y})}^s}\bigg] 
\\
&= \frac{1}{T}\sup_{s\ge 0} \E\bigg[\sum_{k=1}^{K}B_k - \log_2\frac{\sum_{\Sm' \in \prod_{k=1}^K \Sc_k}  {\hat{p}_{\rvMat{S}|\rvMat{Y}}(\Sm'|\Ym)}^s}{{\hat{p}_{\rvMat{S}|\rvMat{Y}}(\rvMat{S}|\rvMat{Y})}^s}\bigg] \label{eq:tmp407}\\
&= \frac{1}{T}\sum_{k=1}^{K}B_k - \frac{1}{T}\inf_{s\ge 0} \E\bigg[\sum_{k=1}^{K}\log_2 \frac{\sum_{\sv'_k \in \Sc_k}{\hat{p}_{\rvVec{s}_k|\rvMat{Y}}(\sv'_k | \rvMat{Y})}^s}{{\hat{p}_{\rvVec{s}_k|\rvMat{Y}}(\rvVec{s}_k|\rvMat{Y})}^s}\bigg] \quad \label{eq:approx_achievableRate}
\end{align}
bits/channel use,
where the expectation is over the joint distribution of $\rvMat{S}$ and $\rvMat{Y}$, i.e., $p_{\rvMat{Y}|\rvMat{S}}p_{\rvMat{S}}$, \eqref{eq:tmp407} holds because the transmitted symbols are independent and have uniform prior distribution, and \eqref{eq:approx_achievableRate} follows from the factorization of $\hat{p}_{\rvMat{S}|\rvMat{Y}}$ in \eqref{eq:post_approx}. 
The generalized mutual information $\GMI$ is upper bounded by the sum-rate achieved with the optimal decoding metric ${p}_{\rvMat{S}|\rvMat{Y}}$ given by
\begin{align}
R &= \frac{1}{T} I(\rvMat{S};\rvMat{Y}) 
\\
&= \frac{1}{T}h(\rvMat{S}) - \frac{1}{T}h(\rvMat{S}| \rvMat{Y}) \\
&= \frac{1}{T}\sum_{k=1}^{K}B_k-\frac{1}{T}\E\bigg[\log_2\frac{1}{p_{\rvMat{S}|\rvMat{Y}}(\rvMat{S}|\rvMat{Y})}\bigg] \label{eq:tmp446} \\
&= \frac{1}{T}\sum_{k=1}^{K}B_k-\frac{1}{T}\E\bigg[\log_2\frac{\sum_{\Sm' \in \prod_{k=1}^K \Sc_k} p_{\rvMat{Y}|\rvMat{S}}(\rvMat{Y}|\Sm')}{p_{\rvMat{Y}|\rvMat{S}}(\rvMat{Y}|\rvMat{S})}\bigg] \label{eq:achievableRate2}
\end{align}
bits/channel use, where \eqref{eq:achievableRate2} follows from the Bayes' law and the uniformity of the prior distribution.
$\GMI$ approaches $R$ as $\hat{p}_{\rvMat{S}|\rvMat{Y}}$ gets close to ${p}_{\rvMat{S}|\rvMat{Y}}$. Note that 
if we fix $s = 1$ \revise{in place of the infimum in~\eqref{eq:approx_achievableRate}}, it holds that 
\begin{align}
R - \GMI(s = 1) &= \frac{1}{T}\E[\log_2\frac{p_{\rvMat{S}|\rvMat{Y}}(\rvMat{S}|\rvMat{Y})}{\hat{p}_{\rvMat{S}|\rvMat{Y}}(\rvMat{S}|\rvMat{Y})}] \\
&= \frac{1}{T} \E_{\rvMat{Y}}\big[D({p_{\rvMat{S}|\rvMat{Y}}} \|{\hat{p}_{\rvMat{S}|\rvMat{Y}}})\big],
\end{align}
which converges to zero when the KL divergence between $\hat{p}_{\rvMat{S}|\rvMat{Y}}$ and $p_{\rvMat{S}|\rvMat{Y}}$ vanishes.

The expectations in~\eqref{eq:approx_achievableRate} and~\eqref{eq:achievableRate2} cannot be derived in closed form in general. Alternatively, we can evaluate $R$ and $\GMI$ (and also  $\E_{\rvMat{Y}}[D({p_{\rvMat{S}|\rvMat{Y}}}\big\|{\hat{p}_{\rvMat{S}|\rvMat{Y}}})]$) numerically with the Monte Carlo method. 
Note that when $K$ or $B_k$ is large, even a numerical evaluation of $R$ and $\E_{\rvMat{Y}}[D({p_{\rvMat{S}|\rvMat{Y}}}\big\|{\hat{p}_{\rvMat{S}|\rvMat{Y}}})]$ is not possible. Therefore, we choose to use the mismatched sum-rate lower bound $\GMI$ as an information-theoretic metric to evaluate how close  $\hat{p}_{\rvMat{S}|\rvMat{Y}}$ is to $p_{\rvMat{S}|\rvMat{Y}}$.

In what follows, we design a posterior marginal estimation scheme based on EP. We start by providing a brief review of EP in the next section.

\section{Expectation Propagation} \label{sec:EP}
The EP algorithm was first proposed in \cite{Minka:Diss:01} and summarized in, e.g., \cite{Heskes:JSM:05} for approximate inference in probabilistic graphical models. EP is an iterative framework for approximating posterior beliefs by parametric distributions in the exponential family~\cite[Sec.~1.6]{Doksum1977mathStatistics_ideas_and_topics}.
Let us consider a set of unknown variables represented by a random vector $\rvVec{x}$ with posterior of the form
\begin{align}
p_{\rvMat{x}}(\xv)
&\propto \prod_{\alpha} \psi_{\alpha}(\xv_\alpha) 
\label{eq:p_alf},
\end{align}
where $\xv_\alpha$ is the subset of variables involved in the factor $\psi_{\alpha}$ corresponding to a partition $\{\rvVec{x}_\alpha\}$ of $\rvVec{x}$.  
Furthermore, let us partition the components of $\rvVec{x}$ into some sets $\{\rvVec{x}_\beta\}$, where no $\rvVec{x}_\beta$ is split across factors (i.e., $\forall \alpha,\beta$ either $\rvVec{x}_\beta\subset\rvVec{x}_\alpha$ or $\rvVec{x}_\beta\cap\rvVec{x}_\alpha=\emptyset$). \revise{The partition $\{\rvVec{x}_\alpha\}$ represents the local dependency of the variables given by the intrinsic factorization \eqref{eq:p_alf}, while the partition $\{\rvVec{x}_\beta\}$ groups the variables that always occur together in a factor.} We are interested in the posterior marginals w.r.t. the partition $\{\rvVec{x}_\beta\}$. In the following, we omit $\rvVec{x}$ in the subscripts since it is obvious.

EP approximates the true posterior $p$ from \eqref{eq:p_alf} by a distribution $\hat{p}$ that can be expressed in two ways. First, it can be expressed w.r.t. the ``target'' partition $\{\rvVec{x}_\beta\}$ as
\begin{align}
\hat{p}(\xv)
&= \prod_{\beta} \hat{p}_\beta(\xv_\beta) 
\label{eq:ptil_beta} ,
\end{align}
where $\hat{p}_\beta$ are constrained to be in the exponential family~\cite[Sec.~1.6]{Doksum1977mathStatistics_ideas_and_topics}, such that~(s.t.)
\begin{align}
\hat{p}_\beta(\xv_\beta) 
&= 
\exp\big(\gammav_\beta^\T \phiv_\beta(\xv_\beta) - A_\beta(\gammav_\beta)\big) 
\label{eq:expfam},
\end{align}
for 
sufficient statistics $\phiv_\beta(\xv_\beta)$, 
parameters $\gammav_\beta$, 
and log-partition function 
$A_\beta(\gammav) \defeq \ln \int 
e^{\gammav^\T\phiv_\beta(\xv_\beta)} \dif\xv_\beta$.
Second, $\hat{p}$ can also be expressed w.r.t. the partition $\{\rvVec{x}_\alpha\}$ %
as
\begin{align}
\hat{p}(\xv)
&\propto \prod_{\alpha} m_\alpha(\xv_\alpha) 
\label{eq:ptil_alf} ,
\end{align}
in accordance with~\eqref{eq:p_alf}. 
For \eqref{eq:ptil_beta} and \eqref{eq:ptil_alf} to be consistent, the terms $m_\alpha$ should also factorize over $\beta$, i.e., there exist factors $\m{\alpha}{\beta}$ of the form
$
\m{\alpha}{\beta}(\xv_\beta) 
= 
\exp\big(\gam{\alpha}{\beta}^\T \phiv_\beta(\xv_\beta)
\big) 
$
s.t.
\begin{align}
\!\!\!\!\!\!\!\!\!\!m_\alpha(\xv_\alpha)
&= \prod_{\beta \in \Nalpha}  \m{\alpha}{\beta}(\xv_\beta) 
= \exp\bigg( \sum_{\beta\in\Nalpha}\gam{\alpha}{\beta}^\T \phiv_\beta(\xv_\beta) \bigg),
\\
\!\!\!\!\!\!\!\!\!\!\hat{p}_\beta(\xv_\beta)
&\propto \prod_{\alpha \in \Nbeta}  \m{\alpha}{\beta}(\xv_\beta) 
= \exp\bigg( \sum_{\alpha\in\Nbeta}\gam{\alpha}{\beta}^\T \phiv_\beta(\xv_\beta) \bigg), \label{eq:phat_beta}
\end{align}
where $\Nalpha$ collects the indices $\beta$ for which $\rvVec{x}_\beta\subset \rvVec{x}_\alpha$,
and $\Nbeta$ collects the indices $\alpha$ for which $\rvVec{x}_\beta\subset \rvVec{x}_\alpha$.
It turns out that $\m{\alpha}{\beta}$ can be interpreted as a message from the factor node $\alpha$ to the variable node $\beta$ on a bipartite factor graph~\cite{Kschischang2001factor_graph}.
In this case, 
$\hat{p}_\beta(\xv_\beta)$ is proportional to the product of all messages impinging on variable node $\beta$. 

EP works by first initializing all $m_\alpha(\xv_\alpha)$ 
and $\hat{p}_\beta(\xv_\beta)$ (typically by the respective priors, which are assumed to also belong to the considered exponential family), then iteratively updating each approximation factor $m_\alpha$ in turn. Let us fix a factor index $\alpha$. According to~\cite{Minka:Diss:01}, the ``tilted'' distribution $q_\alpha$ is constructed by swapping the true potential $\psi_\alpha$ for its approximate $m_\alpha$ in $\hat{p}(\xv)$ as
$
q_{\alpha}(\xv)
= \frac{\hat{p}(\xv) \psi_\alpha(\xv_\alpha)}
{m_\alpha(\xv_\alpha)} ,
$
where it is assumed that $\int q_\alpha(\xv) \dif \xv < \infty$. This tilted distribution is projected back onto the exponential family by minimizing the KL divergence:
\begin{align}
\hat{p}\new_\alpha(\xv)
&= \arg\min_{\underline{p}\in\Pc} D\big( q_{\alpha}(\xv) \big\| \underline{p}(\xv )\big) 
\label{eq:pnew},
\end{align}
where $\Pc$ is the set of distributions of the form of $\hat{p}$ in \eqref{eq:ptil_beta}, i.e., 
$
\underline{p}(\xv)
= \prod_{\beta} \underline{p}_\beta(\xv_\beta) = \prod_{\beta}\exp\big(\underline{\gammav}_\beta^\T \phiv_\beta(\xv_\beta) - A_\beta(\underline{\gammav}_\beta)\big)
$
for some $\{\underline{\gammav}_\beta\}$. Following~\cite{Minka:Diss:01}, 
the solution to \eqref{eq:pnew} is as follows.
\begin{proposition} \label{prop:pnew}
	The solution to \eqref{eq:pnew} is given by $\hat{p}\new_\alpha(\xv) = \prod_\beta \hat{p}\new_{\alpha,\beta}(\xv_\beta)$ with 
	$
	\hat{p}\new_{\alpha,\beta}(\xv_\beta) = 
	{\hat{p}}_\beta(\xv_\beta)$,
	$\forall \beta\notin\Nalpha$, and $\hat{p}\new_{\alpha,\beta}(\xv_\beta) = 	\exp\big(\uvec{\gamma}_\beta^\T \phiv_\beta(\xv_\beta)-A_\beta(\uvec{\gamma}_\beta)\big)$ with $\uvec{\gamma}_\beta$ s.t. $\E_{\hat{p}\new_{\alpha,\beta}}[\phiv_\beta(\xv_\beta)]
	=\E_{q_\alpha}[\phiv_\beta(\xv_\beta)]$,
	$\forall \beta\in\Nalpha$, whenever the expectation $\E_{q_\alpha}[\cdot]$ exists.
\end{proposition}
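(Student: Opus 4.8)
The plan is to turn the constrained KL minimization in \eqref{eq:pnew} into a per-block moment-matching condition, exploiting the factorized exponential-family structure of $\Pc$. First I would write $D(q_\alpha\|\underline{p}) = \E_{q_\alpha}[\log q_\alpha] - \E_{q_\alpha}[\log\underline{p}]$ and observe that the entropy term $\E_{q_\alpha}[\log q_\alpha]$ does not depend on the optimization variable $\underline{p}$. Hence minimizing $D(q_\alpha\|\underline{p})$ over $\underline{p}\in\Pc$ is equivalent to maximizing the cross-entropy $\E_{q_\alpha}[\log\underline{p}]$, which is a much more tractable object because $\log\underline{p}$ is linear in the natural parameters.

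Next I would substitute the factorized form $\underline{p}(\xv)=\prod_\beta \exp(\underline{\gammav}_\beta^\T\phiv_\beta(\xv_\beta)-A_\beta(\underline{\gammav}_\beta))$, which gives $\E_{q_\alpha}[\log\underline{p}] = \sum_\beta(\underline{\gammav}_\beta^\T\E_{q_\alpha}[\phiv_\beta(\xv_\beta)] - A_\beta(\underline{\gammav}_\beta))$, where the parameters $\{\underline{\gammav}_\beta\}$ are decoupled across $\beta$. The objective therefore splits into independent concave programs (each $A_\beta$ is convex, being a log-partition function), one per $\beta$. Setting the gradient of the $\beta$-th summand to zero and invoking the exponential-family identity $\nabla A_\beta(\underline{\gammav}_\beta)=\E_{\underline{p}_\beta}[\phiv_\beta(\xv_\beta)]$ converts stationarity into the moment-matching equation $\E_{\underline{p}_\beta}[\phiv_\beta(\xv_\beta)]=\E_{q_\alpha}[\phiv_\beta(\xv_\beta)]$, which is exactly the condition defining $\uvec{\gamma}_\beta$; strict convexity of $A_\beta$ (for a minimal, regular family) then yields uniqueness of the maximizer.

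To recover the two cases in the statement, I would inspect the tilted distribution $q_\alpha(\xv)=\hat{p}(\xv)\psi_\alpha(\xv_\alpha)/m_\alpha(\xv_\alpha)$. By the non-splitting assumption, every $\beta\notin\Nalpha$ satisfies $\xv_\beta\cap\xv_\alpha=\emptyset$, so neither $\psi_\alpha$ nor $m_\alpha$ involves $\xv_\beta$; consequently the factor $\hat{p}_\beta(\xv_\beta)$ passes through $q_\alpha$ untouched, and the $\xv_\beta$-marginal of $q_\alpha$ equals $\hat{p}_\beta(\xv_\beta)$. Since $\hat{p}_\beta$ already lies in the exponential family, the moment-matching solution for such $\beta$ is $\hat{p}_\beta$ itself, giving $\hat{p}\new_{\alpha,\beta}=\hat{p}_\beta$; only the blocks $\beta\in\Nalpha$ are genuinely refined, with natural parameter $\uvec{\gamma}_\beta$ fixed by the moment condition above.

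The step I expect to be most delicate is the moment-matching argument: one must ensure that $\E_{q_\alpha}[\phiv_\beta(\xv_\beta)]$ lies in the interior of the set of realizable moments, so that a parameter $\uvec{\gamma}_\beta$ attaining it exists and, by strict convexity, is unique. This is precisely where the hypotheses that $\int q_\alpha(\xv)\dif\xv<\infty$ and that $\E_{q_\alpha}[\cdot]$ exists enter the argument. Everything else is a routine separation-plus-concave-optimization computation once the cross-entropy reformulation is established.
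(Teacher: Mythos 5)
Your proposal is correct and follows essentially the same route as the paper's proof: both drop the $\underline{p}$-independent entropy term, exploit the decoupling of the objective over the blocks $\beta$, and invoke convexity of the log-partition function together with the identity $\nabla_{\gammav}A_\beta(\gammav)=\E_{\underline{p}_\beta}[\phiv_\beta(\xv_\beta)]$ to reduce stationarity to moment matching, while the factorization of $q_\alpha$ handles the blocks $\beta\notin\Nalpha$. The only cosmetic difference is that the paper treats the $\beta\notin\Nalpha$ terms directly as KL divergences $D(\hat{p}_\beta\|\underline{p}_\beta)$ minimized at zero, whereas you fold them into the same moment-matching argument; these are equivalent.
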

\begin{proof}
	The proof is given in Appendix~\ref{proof:pnew}.
\end{proof}
The factor $m_\alpha$ is then updated via
\begin{align}
	m_\alpha\new(\xv_\alpha)
&= \frac{\hat{p}\new_\alpha(\xv) m_{\alpha}(\xv_\alpha)}
{\hat{p}(\xv)} 
\label{eq:malf1} \\
&= \bigg[\prod_{\beta\in\Nalpha} \m{\alpha}{\beta}(\xv_\beta)\bigg] 
\frac{\prod_{\beta\in\Nalpha} \hat{p}\new_{\alpha,\beta}(\xv_\beta)}
{\prod_{\beta\in\Nalpha} \hat{p}_\beta(\xv_\beta)} \\
&\propto \bigg[\prod_{\beta\in\Nalpha} \m{\alpha}{\beta}(\xv_\beta)\bigg] 
\notag \\&\quad \times 
\frac{\prod_{\beta\in\Nalpha} \hat{p}\new_{\alpha,\beta}(\xv_\beta)}
{\prod_{\beta\in\Nalpha} \big[ 
	\m{\alpha}{\beta}(\xv_\beta)
	\prod_{\alpha'\in\Nbeta\setminus\alpha}
	\m{\alpha'}{\beta}(\xv_\beta)
	\big]} \\
&= \prod_{\beta\in\Nalpha} 
\m{\alpha}{\beta}\new(\xv_\beta)
\label{eq:malf2} ,
\end{align}
with 
\begin{align}
\m{\alpha}{\beta}\new(\xv_\beta) \defeq \frac{\hat{p}\new_{\alpha,\beta}(\xv_\beta)}
{ \prod_{\alpha'\in\Nbeta\setminus\alpha}
	\m{\alpha'}{\beta}(\xv_\beta)}.\label{eq:malf_new}
\end{align} 
Note that, on the right-hand side~(RHS) of \eqref{eq:malf1}, all terms dependent on $\{\xv_\beta\}_{\beta\notin\Nalpha}$ cancel, leaving the dependence only on $\{\xv_\beta\}_{\beta\in\Nalpha}$. Thus, the update of $m_\alpha$ only affects the approximate posterior of nodes $\beta$ in the neighborhood of node $\alpha$. After that, the process is repeated with the next $\alpha$.

A message-passing view of Proposition~\ref{prop:pnew} can be seen by expanding $q_\alpha(\xv)$ as
\begin{align}
	q_{\alpha}(\xv)
	&= \frac{\psi_\alpha(\xv_\alpha)}{m_\alpha(\xv_\alpha)}
	\bigg[\prod_{\beta\in\Nalpha} \prod_{\alpha'\in\Nbeta}\m{\alpha'}{\beta}(\xv_\beta) \bigg] 
	\bigg[\prod_{\beta\notin\Nalpha} \hat{p}_\beta(\xv_\beta) \bigg] 
\\
&= \psi_\alpha(\xv_\alpha) 
\bigg[\prod_{\beta\in\Nalpha} \prod_{\alpha'\in\Nbeta\setminus\alpha}   \m{\alpha'}{\beta}(\xv_\beta) \bigg] 
\bigg[\prod_{\beta\notin\Nalpha} \hat{p}_\beta(\xv_\beta) \bigg] 
\label{eq:qalf2},
\end{align}
then, using the natural logarithm for the KL divergence, 
it follows that
\begin{align}
\lefteqn{
	D\big( q_{\alpha}(\xv) \big\| \underline{p}(\xv)\big) 
}\nonumber \\
&= \int q_{\alpha}(\xv) \ln \frac{q_{\alpha}(\xv)}{ \underline{p}(\xv )} \dif \xv \\
&= \int 
\psi_\alpha(\xv_\alpha) 
\bigg[\prod_{\beta\in\Nalpha} \prod_{\alpha'\in\Nbeta\setminus\alpha} \m{\alpha'}{\beta}(\xv_\beta) \bigg] 
\bigg[\prod_{\beta\notin\Nalpha} \hat{p}_\beta(\xv_\beta) \bigg] 
\nonumber\\&\quad\times
\ln \bigg(\frac{ 
	\psi_\alpha(\xv_\alpha) 
	\prod_{\beta\in\Nalpha} \prod_{\alpha'\in\Nbeta\setminus\alpha} \m{\alpha'}{\beta}(\xv_\beta) 
}{ 
	\prod_{\beta\in\Nalpha} \underline{p}_\beta(\xv_\beta)
} 
\bigg. \notag \\&\quad \bigg.\times 
\frac{ 
	\prod_{\beta\notin\Nalpha} \hat{p}_\beta(\xv_\beta) 
}{ 
	\prod_{\beta\notin\Nalpha} \underline{p}_\beta(\xv_\beta) 
}
 \bigg)
\dif \xv  \\ 
&= \int 
\psi_\alpha(\xv_\alpha) 
\bigg[\prod_{\beta\in\Nalpha} \prod_{\alpha'\in\Nbeta\setminus\alpha} \m{\alpha'}{\beta}(\xv_\beta) \bigg] 
\nonumber\\&\quad\times 
\ln \frac{ 
	\psi_\alpha(\xv_\alpha) 
	\prod_{\beta\in\Nalpha} \prod_{\alpha'\in\Nbeta\setminus\alpha} \m{\alpha'}{\beta}(\xv_\beta)
}{ 
	\prod_{\beta\in\Nalpha} \underline{p}_\beta(\xv_\beta)
}
\dif \xv_\alpha 
\nonumber\\&\quad
+ \sum_{\beta\notin\Nalpha} \int \hat{p}_\beta(\xv_\beta) 
\ln \frac{ \hat{p}_\beta(\xv_\beta) }{ \underline{p}_\beta(\xv_\beta) }
\dif \xv_\beta 
\\ 
&= \sum_{\beta \in \Nalpha} \int 
q_{\alpha,\beta}(\xv_{\beta})
\ln \frac{ q_{\alpha,\beta}(\xv_{\beta}) }{ \underline{p}_{\beta}(\xv_{\beta}) }
\dif \xv_{\beta} 
+ \sum_{\beta\notin\Nalpha} 
D\big(\hat{p}_\beta \big\| \underline{p}_\beta \big)
+ \const 
\\ 
&= \sum_{\beta\in\Nalpha} 
D\big(q_{\alpha,\beta} \big\| \underline{p}_\beta \big)
+ \sum_{\beta\notin\Nalpha} 
D\big(\hat{p}_\beta \big\| \underline{p}_\beta \big)
+ \const 
\label{eq:KLalf2},
\end{align}
where 
\begin{align}
q_{\alpha,\beta}(\xv_{\beta})  \defeq  \int 
\psi_\alpha(\xv_\alpha) 
\bigg[ \displaystyle\prod_{\beta\in\Nalpha} \displaystyle\prod_{\alpha'\in\Nbeta\setminus\alpha}     \m{\alpha'}{\beta}(\xv_\beta) \bigg]  
\dif\xv_{\alpha\setminus\beta} \quad \label{eq:q_alpha_beta}
\end{align}
and $\const$ represents a constant w.r.t. the distribution $\underline{p}$ (which we optimize) whose value is irrelevant and may change at each occurrence.
Equation \eqref{eq:KLalf2} says that, for each $\beta$ in the neighborhood of node $\alpha$, the optimal $\underline{p}_{\beta}$ (i.e., $\hat{p}\new_{\alpha,\beta}$) is \revise{uniquely identified as} the moment match of $q_{\alpha,\beta}$ in the exponential family with sufficient statistics $\phiv_\beta(\xv_\beta)$, where $q_{\alpha,\beta}$ is formed by taking the product of the true factor $\psi_\alpha$ and all the messages impinging on that factor, and then integrating out all variables except $\xv_{\beta}$.
Furthermore, \eqref{eq:malf_new} says that the new message $\m{\alpha}{\beta}\new$ passed from $\alpha$ to $\beta\in\Nalpha$ equals $\hat{p}\new_{\alpha,\beta}$ divided by the product of messages $\{\m{\alpha'}{\beta}\}_{\alpha'\in\Nbeta\setminus\alpha}$, i.e., previous messages to $\beta$ from all directions except $\alpha$. \revise{An illustrative example is shown in Fig.~\ref{fig:factor_graph_EP}. 
\begin{figure}[ht]
	\centering
	\scalebox{.85}{\begin{tikzpicture}
		\tikzstyle{latent}=[circle, minimum size = 7mm, draw =black!80, node distance = 5mm]
		\tikzstyle{factor}=[rectangle, minimum size = 7mm, draw =black!80, node distance = 7mm]
		\tikzstyle{connect}=[draw]
		
		\node[factor] (psi1) {$\psi_a$};
		\node[factor,below=of psi1,text=red, draw=red] (psi2) {$\psi_b$};
		\node[factor,below=of psi2] (psi3) {$\psi_c$};
		
		\node[latent, above right=0mm and 40mm of psi1](x1) {$\rvVec{x}_{1}$};
		\node[latent,below=of x1,text=red, draw=red] (x2) {$\rvVec{x}_2$};	
		\node[latent,below=of x2] (x3) {$\rvVec{x}_3$};	
		\node[latent,below=of x3] (x4) {$\rvVec{x}_4$};		
		
		\path (psi1) [connect] -- node [near start,above=.001pt,sloped ] {${m_{a,1}}$} (x1);
		\path (psi1) [connect] -- node [near start,below,sloped ] {${m_{a,2}}$} (x2);
		\path (psi2) [connect,draw=red] -- node [near start,above=.001pt,sloped, text = red] {${m_{b,2}}$} (x2);
		\path (psi2) [connect] -- node [midway,above=.001pt,sloped ] {$~~{m_{b,3}}$} (x3);
		\path (psi2) [connect] -- node [pos=.45,above=.001pt,sloped ] {${m_{b,4}}$} (x4);
		\path (psi3) [connect] -- node [pos=.18,above=.001pt,sloped ] {${m_{c,3}}$} (x3);
		\path (psi3) [connect] -- node [near start,below,sloped ] {${m_{c,4}}$} (x4);
		\end{tikzpicture}}
	\caption{\revise{An example of the factor graph representation of EP for $\alpha \in \{a,b,c\}$ and $\beta \in \{1,2,3,4\}$. 
	For $\alpha = b$ and $\beta = 2$, according to \eqref{eq:q_alpha_beta}  and \eqref{eq:malf_new}, $q_{b,2}(\rvVec{x}_{2}) = \int \psi_b(\rvVec{x}_2,\rvVec{x}_3,\rvVec{x}_4) m_{a,2}(\rvVec{x}_2) m_{c,3}(\rvVec{x}_3) m_{c,4}(\rvVec{x}_4) \dif \rvVec{x}_3 \dif \rvVec{x}_4$ and $m_{b,2}\new(\rvVec{x}_2) = \frac{\hat{p}\new_{b,2}(\rvVec{x}_2)}{m_{a,2}(\rvVec{x}_2)}$, respectively.}
	}
	\label{fig:factor_graph_EP}
\end{figure}
}

\section{Application of EP to Non-Coherent Detection} \label{sec:EP_application}
In order to apply EP to the non-coherent detection problem described in Section~\ref{sec:model}, we express the transmitted signal as $\rvVec{s}_k = \sv_k\of{\rv{i}_k}$, where $\rv{i}_1,\dots,\rv{i}_K$ are independent random indices.\footnote{\revise{The application of EP to non-coherent multi-user detection is non-trivial. Many choices can be made to model and partition the unknowns, but may not result in tractable derivation. Our choice is carefully made to enable closed-form message updates.}} \revise{With the assumption that the constellation symbols are transmitted with equal probability}, $\rv{i}_k$ are uniformly distributed over $[|\Sc_k|]$, $k\in [K]$. We rewrite the received signal~\eqref{eq:channel_model} in vector form as
%
\begin{align}
\rvVec{y} = \sum_{k=1}^K \rvVec{z}_k + \rvVec{w},
\label{eq:y}
\end{align}
where $\rvVec{y}  \defeq {\rm vec}(\rvMat{Y}^\T)$, $\rvVec{z}_k  \defeq (\sv\of{\rv{i}_k}_k \otimes \Id_N) \rvVec{h}_k$, and $\rvVec{w} \defeq {\rm vec}(\rvMat{W}^\T)  \sim\Nc(\mathbf{0},\sigma^2 \Id_{NT})$. The problem of estimating $p_{\rvVec{s}_k|\rvMat{Y}}$ is equivalent to estimating $p_{\rv{i}_k|\rvMat{Y}}$ since they admit the same \revise{PMF}. 

With $\rvVec{z}\defeq{[\rvVec{z}_1^\T,\dots,\rvVec{z}_K^\T]}^\T$ and
$\rvVec{i}\defeq[\rv{i}_1,\dots,\rv{i}_K]^\T$, we can write
\begin{align}
	p_{\rvVec{i},\rvVec{z} | \rvVec{y}}(\iv,\zv|\yv)
	& \propto p_{\rvVec{i},\rvVec{z},\rvVec{y}}(\iv,\zv,\yv) \\
&= p_{\rvVec{y}|\rvVec{z}}(\yv|\zv) p_{\rvVec{z}|\rvVec{i}}(\zv|\iv)p_{\rvVec{i}}(\iv) \\
&= \psi_0(\zv_1,\dots,\zv_K) 
\bigg[\prod_{k=1}^K \psi_{k1}(\zv_{k},{i}_k)\bigg]
\bigg[\prod_{k=1}^K \psi_{k2}(i_k)\bigg]
\label{eq:post},
\end{align}
corresponding to \eqref{eq:p_alf}, where
\begin{align}
\psi_0(\zv_1,\dots,\zv_K) &\defeq p_{\rvVec{y}|\rvVec{z}}(\yv|\zv)
= \Nc\bigg( \yv;\sum_{k=1}^K \zv_k, \sigma^2\Id_{NT} \bigg), \\
\psi_{k1}(\zv_{k},i_k) &\defeq p_{\rvVec{z}_k|\rv{i}_k}(\zv_k)
= \Nc\big(\zv_k; \mathbf{0}, ( \sv\of{i_k}_k\sv\ofH{i_k}_k) \otimes \revise{\Xim_k} \big), \label{eq:psi_k1}\\
\psi_{k2}(i_k) &\defeq p_{\rv{i}_k}(i_k) 
= \frac{1}{|\Sc_k|} \text{~~for~~} i_k\in[|\Sc_k|].
\end{align}

In the following, we consider a realization $\yv$ of $\rvVec{y}$ and use EP to infer the posterior of the indices $\{\rv{i}_k\}$ and, as a by-product, the posterior of $\rvVec{z}_k$, $k\in[K]$.
To do so, we choose the partition $\rvVec{x}=\{\rvVec{z}_k,\rv{i}_k\}_{k=1}^K$ and illustrate the interaction between these variables and the factors $\psi_0,\psi_{k1}$, $\psi_{k2}$ on the bipartite factor graph in Fig.~\ref{fig:factor_graph}. This graph is a tree with a root $\rvVec{y}$ and $K$ leaves $\{\psi_{k2}\}_{k=1}^K$.
\begin{figure}[ht]
	\centering
	\scalebox{.8}{\begin{tikzpicture}
		\tikzstyle{latent}=[circle, minimum size = 7mm, draw =black!80, node distance = 15mm]
		\tikzstyle{obs}=[circle, minimum size = 7mm, fill=gray!40, draw =black!80, node distance = 3mm]
		\tikzstyle{factor}=[rectangle, minimum size = 7mm, draw =black!80, node distance = 15mm]
		\tikzstyle{connect}=[draw]
		
		\node[factor] (psi0) {$ \psi_0 $};			
		\node[obs,left=of psi0] (y) {$ \rvVec{y} $};							
		\path (y) edge [connect] (psi0);
		\node[latent,right=18mm of psi0] (zk) {$\rvVec{z}_k$};			
		\path (psi0) [connect] -- node [midway,above,sloped] {$~~~\underrightarrow{\muv_{k0}, \Cm_{k0}}$} (zk);
		\node[latent,above=10mm of zk] (z1) {$\rvVec{z}_1$};				
		\path (psi0) [connect] -- node [midway,above,sloped] {$\underrightarrow{\muv_{10}, \Cm_{10}}~$} (z1);
		\node[latent,below=10mm of zk] (zK) {$\rvVec{z}_K $};				
		\path (psi0) [connect] -- node [midway,below,sloped] {$\underrightarrow{\muv_{K0}, \Cm_{K0}}$} (zK);
		\node[factor,right=of z1] (psi11) {$\psi_{11}$};	
		\path (z1) [connect] -- node [midway,above ] {$\underleftarrow{\muv_{11}, \Cm_{11}}$} (psi11);
		\node[factor,right=of zk] (psik1) {$\psi_{k1}$};	
		\path (zk) [connect] -- node [midway,above ] {$\underleftarrow{\muv_{k1}, \Cm_{k1}}$} (psik1);
		\node[factor,right=of zK] (psiK1) {$\psi_{K1} $};	
		\path (zK) [connect] -- node [midway,above ] {$\underleftarrow{\muv_{K1}, \Cm_{K1}}$} (psiK1);
		\node[latent,right=of psi11] (i1) {$\rv{i}_1$};			
		\path (psi11) [connect] -- node [midway,above ] {$\underrightarrow{\{\pi_{11}^{(i)}\}_{i=1}^{|\Sc_k|}}$}(i1);
		\node[latent,right=of psik1] (ik) {$\rv{i}_k$};			
		\path (psik1) [connect] -- node [midway,above ] {$\underrightarrow{\{\pi_{k1}^{(i)}\}_{i=1}^{|\Sc_k|}}$}(ik);
		\node[latent,right=of psiK1] (iK) {$ \rv{i}_K $};			
		\path (psiK1) [connect] -- node [midway,above ] {$\underrightarrow{\{\pi_{K1}^{(i)}\}_{i=1}^{|\Sc_k|}}$}(iK);
		\node[factor,right=of i1] (psi12) {$\psi_{12}$};	
		\path (i1) [connect] -- node [midway,above ] {$\underleftarrow{\{\pi_{12}^{(i)}\}_{i=1}^{|\Sc_k|}}$}(psi12);
		\node[factor,right=of ik] (psik2) {$\psi_{k2}$};	
		\path (ik) [connect] -- node [midway,above ] {$\underleftarrow{\{\pi_{k2}^{(i)}\}_{i=1}^{|\Sc_k|}}$}(psik2);
		\node[factor,right=of iK] (psiK2) {$\psi_{K2} $};	
		\path (iK) [connect] -- node [midway,above ] {$\underleftarrow{\{\pi_{K2}^{(i)}\}_{i=1}^{|\Sc_k|}}$}(psiK2);
		
		\node[above=3mm of zk] {$\vdots$};		\node[below=.3mm of zk] {$\vdots$};
		\node[above=3mm of psik1] {$\vdots$};	\node[below=.3mm of psik1] {$\vdots$};
		\node[above=3mm of ik] {$\vdots$};		\node[below=.3mm of ik] {$\vdots$};
		\node[above=3mm of psik2] {$\vdots$};	\node[below=.3mm of psik2] {$\vdots$};
		\end{tikzpicture}}
	\caption{A factor graph representation of the non-coherent detection problem. The messages are depicted with under-arrows showing their direction from a factor node to a variable node.
	}
	\label{fig:factor_graph}
\end{figure}
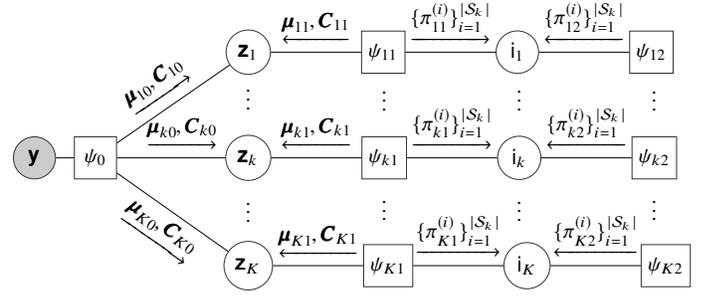

We write the EP approximation according to~\eqref{eq:ptil_beta} as
\begin{align}
\hat{p}_{\rvVec{x}|\rvVec{y}}(\xv|\yv) = \hat{p}_{\rvVec{i},\rvVec{z} | \rvVec{y}}(\iv,\zv|\yv) 
&= \prod_{k=1}^K \hat{p}_{\rvVec{z}_k}(\zv_k)\hat{p}_{\rv{i}_k}(i_k), \label{eq:p_EP_approx1}
\end{align}
where $\hat{p}_{\rvVec{z}_k}(\zv_k)$ and $\hat{p}_{\rv{i}_k}(i_k)$ are implicitly conditioned on $\rvVec{y} = \yv$ and constrained to be a Gaussian vector distribution and a discrete distribution with support $[|\Sc|]$ (both belong to the exponential family), respectively. Specifically, they are parameterized as
\begin{align}
\hat{p}_{\rvVec{z}_k}(\zv_k) 
&= \Nc(\zv_k;\hat{\zv}_k,\Sigmam_k) \text{~~s.t.~~$\Sigmam_k$ is positive definite}, \quad
\label{eq:phat_1k} \\ 
\hat{p}_{\rv{i}_k}(i_k) 
&= \hat{\pi}_{k}\of{i_k} \text{~~for~~} i_k\in[|\Sc_k|]
\text{~~s.t.~~} \sum_{i=1}^{|\Sc_k|} \hat{\pi}_k\of{i}=1. \quad
\label{eq:phat_2k} 
\end{align}
We also write the EP approximation according to~\eqref{eq:ptil_alf} as
\begin{align}
\hat{p}_{\rvVec{x}|\rvVec{y}}(\xv|\yv) &= \hat{p}_{\rvVec{i},\rvVec{z} | \rvVec{y}}(\iv,\zv|\yv) \\
&\propto m_0(\zv_1,\dots,\zv_K) 
\bigg[\prod_{k=1}^K m_{k1} (\zv_k,i_k) \bigg]
\bigg[\prod_{k=1}^K m_{k2} (i_k) \bigg],
\end{align}
where we define
\begin{align}
m_0(\zv_1,\dots,\zv_K) 
&\propto \prod_{k=1}^K \Nc(\zv_k;\muv_{k0},\Cm_{k0}), \\
m_{k1}(\zv_k,i_k) 
&\propto \Nc(\zv_k;\muv_{k1},\Cm_{k1}) \pi_{k1}\of{i_k}, \\
m_{k2}(i_k) 
&= \pi_{k2}\of{i_k} \text{~~for~~} i_k\in[|\Sc_k|]. \\
\end{align}
On the factor graph in Fig.~\ref{fig:factor_graph}, we can interpret 
$(\muv_{k0},\Cm_{k0})$ as the message from factor node $\psi_0$ to variable node $\rvVec{z}_k$, 
$(\muv_{k1},\Cm_{k1})$ as the message from factor node $\psi_{k1}$ to variable node $\rvVec{z}_k$,
$\big\{\pi_{k1}\of{\rv{i}_k}\big\}_{\rv{i}_k=1}^{|\Sc_k|}$ as the message from factor node $\psi_{k1}$ to variable node $\rv{i}_k$, and 
$\big\{\pi_{k2}\of{\rv{i}_k}\big\}_{\rv{i}_k=1}^{|\Sc_k|}$ as the message from factor node $\psi_{k2}$ to variable node $\rv{i}_k$. 
\begin{remark} 
	\revise{Our choice of Gaussian distribution (within the exponential family) in \eqref{eq:phat_1k} is motivated by the fact that when the noise and channel are Gaussian, the symbol posterior takes the form of a Gaussian mixture. It also allows a tractable derivation (using the Gaussian PDF multiplication rule) and closed-form update expressions, as will be shown in the next subsection. If a general (possibly non-Gaussian) channel model is considered, the factor $\psi_{k1}(\zv_k,i_k)$ in \eqref{eq:psi_k1} may be different, but the factor graph in Fig.~\ref{fig:factor_graph} remains unchanged. 
	}
\end{remark}

\subsection{The EP Message Updates}
In the following, we derive the message updates from each of the factor nodes $\psi_0$, $\psi_{k1},$ and $\psi_{k2}$, $k\in [K]$, to the corresponding variable nodes. To do so, for each $\alpha \in \{k1,k2,0\}$, we compute the projected density $\hat{p}\new_{\alpha} = \prod_{k=1}^K \hat{p}\new_{\alpha,\rvVec{z}_k}(\zv_k)\hat{p}\new_{\alpha,\rv{i}_k}({i}_k)$ according to \eqref{eq:p_EP_approx1} and Proposition~\ref{prop:pnew}, and then update the factor $m_\alpha$ according to~\eqref{eq:malf2}. 

\subsubsection{Message $\big\{\pi_{k2}\of{i_k}\big\}_{i_k=1}^{|\Sc_k|}$ from factor node $\psi_{k2}$ to variable node $\rv{i}_k$}
First, we compute $\hat{p}_{k2,\rv{i}_k}\new$ and then the EP message $\big\{\pi_{k2}\of{i_k}\big\}_{i_k=1}^{|\Sc_k|}$ from node $\psi_{k2}$ to node $\rv{i}_k$.
From \eqref{eq:KLalf2} and \eqref{eq:phat_2k}, we know that $\hat{p}_{k2,\rv{i}_k}\new$ is the discrete distribution with \revise{PMF} $\{\hat{\pi}_{k2}\of{i}\}_{i=1}^{|\Sc_k|}$ proportional to $\psi_{k2}({i}_k) \pi_{k1}\of{i_k}$, and so
\begin{align}
\hat{\pi}_{k2}\of{{i}_k}
= \frac{ \psi_{k2}({i}_k) \pi_{k1}\of{{i}_k} }
{ \sum_{i=1}^{|\Sc_k|} \psi_{k2}(i) \pi_{k1}\of{i} } 
= \frac{ \pi_{k1}\of{{i}_k} }
{ \sum_{i=1}^{|\Sc_k|} \pi_{k1}\of{i} } \quad \text{for~${i}_k\in[|\Sc_k|]$},
\label{eq:pihat_k2}  
\end{align}
since $\psi_{k2}({i}_k)$ is constant over these ${i}_k$.
With $\hat{p}\new_{k2,\rv{i}_k}$ computed, \eqref{eq:malf2} implies that the message from node $\psi_{k2}$ to node $\rv{i}_k$ is the \revise{PMF} proportional to
\begin{align}
\frac{\hat{p}\new_{k2,\rv{i}_k}({i}_k)}{\pi_{k1}\of{{i}_k}}
&= \frac{\hat{\pi}_{k2}\of{{i}_k}}{\pi_{k1}\of{{i}_k}}
= \frac{ 1 }{ \sum_{i=1}^{|\Sc_k|} \pi_{k1}\of{i} } 
= \const \quad \text{for ${i}_k\in[|\Sc_k|]$},
\end{align}
and thus 
$
\pi_{k2}\of{{i}_k}
= \frac{1}{|\Sc_k|}
\text{~for~}{i}_k\in[|\Sc_k|].
$

\subsubsection{Messages from factor node $\psi_{k1}$ to variable nodes $\rvVec{z}_k$ and $\rv{i}_k$}
Next, we compute $\hat{p}\new_{k1} = \prod_{k=1}^K \hat{p}\new_{k1,\rvVec{z}_k}(\zv_k)\hat{p}\new_{k1,\rv{i}_k}({i}_k)$ and the messages $\big\{\pi_{k1}\of{{i}_k}\big\}_{{i}_k=1}^{|\Sc_k|}$ and $(\muv_{k1},\Cm_{k1})$ from node $\psi_{k1}$ to nodes $\rv{i}_k$ and $\rvVec{z}_k$, respectively. 

\underline{\it Message $\big\{\pi_{k1}\of{{i}_k}\big\}_{{i}_k=1}^{|\Sc_k|}$ from node $\psi_{k1}$ to node $\rv{i}_k$:}
We first compute $\hat{p}\new_{k1,\rv{i}_k} ({i}_k)$. 
From \eqref{eq:KLalf2} and \eqref{eq:phat_2k}, we know that $\hat{p}\new_{k1,\rv{i}_k}({i}_k)$ is the discrete distribution with support $[|\Sc_k|]$ and \revise{PMF} $\hat{\pi}_{k1}\of{{i}_k}$ proportional to 
\begin{align}
\lefteqn{
	\int \psi_{k1}(\zv_k,{i}_k) \Nc(\zv_k;\muv_{k0},\Cm_{k0}) \pi_{k2}\of{{i}_k} \dif\zv_k
}\nonumber\\
&= \frac{1}{|\Sc_k|}\int
\Nc\big(\zv_k;\mathbf{0},
(\sv_k\of{{i}_k}\sv_k\ofH{{i}_k}) \otimes \revise{\Xim_k}\big) 
\Nc(\zv_k;\muv_{k0},\Cm_{k0}) 
\dif\zv_k\\
&= \frac{1}{|\Sc_k|}\int
\Nc\big(\zv_k;\hat{\zv}_{k{i}_k},\Sigmam_{k{i}_k}\big)
\notag \\ &\quad \times 
\Nc\big(\mathbf{0}; \muv_{k0},
(\sv_k\of{{i}_k}\sv_k\ofH{{i}_k}) 
\otimes \revise{\Xim_k} + \Cm_{k0}\big) \dif\zv_k\\
&= \frac{1}{|\Sc_k|}
\Nc\big(\mathbf{0}; \muv_{k0},
(\sv_k\of{{i}_k}\sv_k\ofH{{i}_k}) 
\otimes \revise{\Xim_k} + \Cm_{k0}\big) ,
\end{align}
where the second equality follows from the Gaussian PDF multiplication rule in Lemma~\ref{lemma:Gmult} 
with
\begin{align}
	\Sigmam_{ki} &= \big( [(\sv_k\of{i}\sv_k\ofH{i}) \otimes \revise{\Xim_k}]^{-1} + \Cm_{k0}^{-1} \big)^{-1} \\
&= \big[(\sv_k\of{i}\sv_k\ofH{i}) \otimes \revise{\Xim_k}\big]\big( (\sv_k\of{i}\sv_k\ofH{i}) \otimes \revise{\Xim_k}
+ \Cm_{k0} \big)^{-1} \Cm_{k0}, \quad
\label{eq:Sigma_ki} \\
	\hat{\zv}_{ki} &= \Sigmam_{ki}\Cm_{k0}^{-1}\muv_{k0} \\
&= \big[(\sv_k\of{i}\sv_k\ofH{i}) \otimes \revise{\Xim_k}\big]\big( (\sv_k\of{i}\sv_k\ofH{i}) \otimes \revise{\Xim_k}
+ \Cm_{k0} \big)^{-1} \muv_{k0}. \qquad
\label{eq:xhat_ki}
\end{align}
Thus
\begin{align}
\hat{\pi}_{k1}\of{{i}_k}
&= \frac{ \Nc\big(\mathbf{0}; \muv_{k0},
	(\sv_k\of{{i}_k}\sv_k\ofH{{i}_k}) 
	\otimes \revise{\Xim_k} + \Cm_{k0}\big) }
{\sum_{i=1}^{|\Sc_k|} \Nc\big(\mathbf{0}; \muv_{k0},
	(\sv_k\of{i}\sv_k\ofH{i}) 
	\otimes \revise{\Xim_k} + \Cm_{k0}\big) }, ~ {i}_k\in[|\Sc_k|].
\label{eq:pihat_k1}
\end{align} 
With $\hat{p}_{k1,\rv{i}_k}\new({i}_k)$ computed, \eqref{eq:malf2} implies that the message $\pi_{k1}\of{{i}_k}$ from node $\psi_{k1}$ to node $\rv{i}_{k}$ is the \revise{PMF} proportional to
$
\frac{\hat{p}\new_{k1,\rv{i}_k}({i}_k)}{\pi_{k2}\of{{i}_k}}
= |\Sc_k|\hat{\pi}_{k1}\of{{i}_k}
\text{~for~}{i}_k\in[|\Sc_k|],
$
and thus 
\begin{align}
\pi_{k1}\of{{i}_k}
&= \frac{|\Sc_k|\hat{\pi}_{k1}\of{{i}_k}}{\sum_{i=1}^{|\Sc_k|} |\Sc_k|\hat{\pi}_{k1}\of{i}}
= \hat{\pi}_{k1}\of{{i}_k}
\text{~~for~~}{i}_k\in[|\Sc_k|] 
\label{eq:p1knew} .
\end{align}

\underline{\it Message $(\muv_{k1},\Cm_{k1})$ from node $\psi_{k1}$ to nodes $\rvVec{z}_k$:}
We next compute $\hat{p}\new_{k1,\rvVec{z}_k}(\zv_k)$. From \eqref{eq:KLalf2} and \eqref{eq:phat_1k}, we know that $\hat{p}\new_{k1,\rvVec{z}_k}(\zv_k)$ is the Gaussian distribution with mean $\hat{\zv}_{k}$ and covariance $\Sigmam_{k}$ matched to that of the PDF proportional to 
\begin{align}
\lefteqn{
	\sum_{{i}_k = 1}^{|\Sc_k|} \psi_{k1}(\zv_k,{i}_k) \Nc(\zv_k;\muv_{k0},\Cm_{k0}) \pi_{k2}\of{{i}_k}
}\nonumber\\
&= \frac{1}{|\Sc_k|}\sum_{i = 1}^{|\Sc_k|}
\Nc\big(\zv_k;\mathbf{0},
(\sv_k\of{i}\sv_k\ofH{i}) \otimes \revise{\Xim_k}\big) 
\Nc(\zv_k;\muv_{k0},\Cm_{k0}) \\
&=  \frac{1}{|\Sc_k|}\sum_{i = 1}^{|\Sc_k|} 
 \Nc\big(\zv_k;\hat{\zv}_{ki},\Sigmam_{ki}\big) 
\Nc\big(\mathbf{0}; \muv_{k0},
(\sv_k\of{i}\sv_k\ofH{i}) 
\otimes \revise{\Xim_k}  +  \Cm_{k0}\big) \\
&\propto 
\sum_{i=1}^{|\Sc_k|} 
\Nc\big(\zv_k;\hat{\zv}_{ki},\Sigmam_{ki}\big)
\hat{\pi}_{k1}\of{i},
\label{eq:GM}
\end{align}
where the second equality follows from the Gaussian PDF multiplication rule in Lemma~\ref{lemma:Gmult} with $\Sigmam_{ki}$ and $\hat{\zv}_{ki}$ defined in \eqref{eq:Sigma_ki} and \eqref{eq:xhat_ki}, respectively. Thus, from \eqref{eq:p1knew}, we have
\begin{align}
\hat{\zv}_k
&= \sum_{i=1}^{|\Sc_k|} {\pi}_{k1}\of{i} \hat{\zv}_{ki},
\label{eq:xhat_k}\\
\Sigmam_k
&= \sum_{i=1}^{|\Sc_k|} {\pi}_{k1}\of{i} (\hat{\zv}_{ki}\hat{\zv}_{ki}^\H + \Sigmam_{ki})
- \hat{\zv}_k\hat{\zv}_k^\H 
\label{eq:Sigma_k} .
\end{align}
With $\hat{p}_{k1,\rvVec{z}_k}\new(\zv_k)$ computed, \eqref{eq:malf2} implies that the message from node $\psi_{k1}$ to node $\rvVec{z}_{k}$ is proportional to
\begin{align}
\frac{\hat{p}\new_{k1,\rvVec{z}_k}(\zv_{k})}
{\Nc(\zv_{k};\muv_{k0},\Cm_{k0})} 
&= \frac{\Nc(\zv_{k};\hat{\zv}_{k},\Sigmam_{k})} 
{\Nc(\zv_{k};\muv_{k0},\Cm_{k0})} 
\propto \Nc(\zv_{k};\muv_{k1},\Cm_{k1}) , \quad
\label{eq:extrinsic}
\end{align}
with
\begin{align}
\Cm_{k1}
&= \big( \Sigmam_{k}^{-1} - \Cm_{k0}^{-1} \big)^{-1},  
\label{eq:C_k1}\\
\muv_{k1}
&= \Cm_{k1} \big( \Sigmam_{k}^{-1}\hat{\zv}_{k} 
- \Cm_{k0}^{-1} \muv_{k0} \big).
\label{eq:mu_k1}
\end{align}
Equations \eqref{eq:C_k1} and \eqref{eq:mu_k1} can be verified using
$\Nc(\zv_{k};\hat{\zv}_{k},\Sigmam_{k})
\propto \Nc(\zv_{k};\muv_{k1},\Cm_{k1})
\Nc(\zv_{k};\muv_{k0},\Cm_{k0})$, which follows from~\eqref{eq:phat_beta}
and the Gaussian PDF multiplication rule in Lemma~\ref{lemma:Gmult}. 

\subsubsection{Message $(\muv_{k0},\Cm_{k0})$ from node $\psi_0$ to node $\rvVec{z}_k$}
Finally, we compute $\hat{p}\new_{0,\rvVec{z}_k}$ and the EP message $(\muv_{k0},\Cm_{k0})$ from node $\psi_0$ to node $\rvVec{z}_k$ for each $k\in[K]$.
From \eqref{eq:KLalf2} and \eqref{eq:phat_1k}, we know that $\hat{p}\new_{0,\rvVec{z}_k}$ is the Gaussian distribution with mean $\hat{\zv}_{k0}$ and covariance $\Sigmam_{k0}$ matched to that of the PDF proportional to 
\begin{align}
\lefteqn{
	\Nc(\zv_k;\muv_{k1},\Cm_{k1})
	\int \psi_0(\zv_{1},\dots,\zv_{K})
	\bigg[ \prod_{j\neq k} \Nc(\zv_{j};\muv_{j1},\Cm_{j1})
	\dif\zv_{j} \bigg]
}\nonumber\\
&= \Nc(\zv_k;\muv_{k1},\Cm_{k1}) 
\notag \\ &\quad \times
\int  
\Nc\bigg(\yv;\zv_{k}
 +  \sum_{j\neq k}\zv_{j},\sigma^2\Id_{NT}\bigg) 
\bigg[ \prod_{j\neq k} \Nc(\zv_{j};\muv_{j1},\Cm_{j1})
\dif\zv_{j} \bigg]  \\
&= \Nc(\zv_{k};\muv_{k1},\Cm_{k1})
\Nc\bigg( \zv_{k}; 
\yv - \sum_{j\neq k}\muv_{j1}, 
\sigma^2\Id_{NT} + \sum_{j\neq k}\Cm_{j1} \bigg) \label{eq:tmp923},
\end{align}
where \eqref{eq:tmp923} follows by applying repeatedly 
 Lemma~\ref{lemma:Gmult}. Applying the Gaussian PDF multiplication rule to \eqref{eq:tmp923}, we obtain
\begin{align}
\Sigmam_{k0}
&= \Big(\Cm_{k1}^{-1} 
+ \Big[\sigma^2\Id_{NT} + \sum_{j\neq k}\Cm_{j1} \Big]^{-1} 
\Big)^{-1}, 
\label{eq:Sigma_k0}\\
\hat{\zv}_{k0}
&= \Sigmam_{k0}
\bigg(
\Cm_{k1}^{-1}\muv_{k1}  + 
\Big[\sigma^2\Id_{NT}  +  \sum_{j\neq k}\Cm_{j1} \Big]^{-1}
\Big[ \yv - \sum_{j\neq k}\muv_{j1} \Big]
\bigg)
\label{eq:xhat_k0} .
\end{align}
Given $\hat{p}\new_{0,\rvVec{z}_k}(\zv_{k})=\Nc(\zv_{k};\hat{\zv}_{k0},\Sigmam_{k0})$, \eqref{eq:malf2} implies that the message from node $\psi_{0}$ to node $\rvVec{z}_{k}$ is proportional to
\begin{align}
\frac{\hat{p}\new_{0,\rvVec{z}_k}(\zv_{k})}
{\Nc(\zv_{k};\muv_{k1},\Cm_{k1})}
&= \frac{\Nc(\zv_{k};\hat{\zv}_{k0},\Sigmam_{k0})}
{\Nc(\zv_{k};\muv_{k1},\Cm_{k1})} 
\propto \Nc(\zv_{k};\muv_{k0},\Cm_{k0}), 
\end{align}
with $\Cm_{k0}  =  \big( \Sigmam_{k0}^{-1} - \Cm_{k1}^{-1} \big)^{-1}$ and $\muv_{k0}
 =  \Cm_{k0} \big( \Sigmam_{k0}^{-1}\zv_{k0} 
- \Cm_{k1}^{-1} \muv_{k1} \big)$. This is verified using
$\Nc(\zv_{k};\hat{\zv}_{k0},\Sigmam_{k0})$
 $\propto  \Nc(\zv_{k};\muv_{k1},\Cm_{k1})
\Nc(\zv_{k};\muv_{k0},\Cm_{k0})$, which follows from~\eqref{eq:phat_beta},
and the Gaussian PDF multiplication rule in Lemma~\ref{lemma:Gmult}.
Plugging in the expressions for $\Sigmam_{k0}^{-1}$ and $\hat{\zv}_{k0}$ from \eqref{eq:Sigma_k0} and \eqref{eq:xhat_k0} yields
\begin{align}
\Cm_{k0}
&= \sigma^2\Id_{NT} + \sum_{j\neq k}\Cm_{j1} ,
\label{eq:C_k0}\\
\muv_{k0}
&= \yv-\sum_{j\neq k}\muv_{j1} 
\label{eq:mu_k0} .
\end{align}
This concludes the derivation of the EP message updates. 

\subsection{Initialization of the EP Messages} \label{sec:EP_initialization}
We initialize the EP messages as follows. First, we choose the non-informative initialization 
$\Cm_{k0}^{-1}=\mathbf{0}$ and $\muv_{k0}=\mathbf{0}$,
so that, from \eqref{eq:pihat_k1},  the initial message from node $\psi_{k1}$ to node $\rv{i}_k$ coincides with the uniform prior
$
{\pi}_{k1}\of{{i}_k} = \hat{\pi}_{k1}\of{{i}_k}
= \frac{1}{|\Sc_k|} \text{~for~} {i}_k\in[|\Sc_k|],
$
and, from \eqref{eq:Sigma_ki} and \eqref{eq:xhat_ki}, the initial parameters $\Sigmam_{ki}
= (\sv_k\of{i}\sv_k\ofH{i}) \otimes \revise{\Xim_k}$ and $\zv_{ki}
= \mathbf{0}$, respectively, for $k\in[K]$ and $i\in[|\Sc_k|]$.
This leads to the initial parameters of $\hat{p}_k(\zv_k)$ from \eqref{eq:xhat_k} and \eqref{eq:Sigma_k} as 
$
\hat{\zv}_{k}
= \mathbf{0} \text{~and~}
\Sigmam_{k}
= \frac{1}{|\Sc_k|}\sum_{i=1}^{|\Sc_k|} (\sv_k\of{i}\sv_k\ofH{i}) \otimes \revise{\Xim_k},
$
and the initial message from node $\psi_{k1}$ to node $\rvVec{z}_{k}$ given in \eqref{eq:C_k1} and \eqref{eq:mu_k1} as
$
\Cm_{k1}
= \Sigmam_k
= \frac{1}{|\Sc_k|}\sum_{i=1}^{|\Sc_k|} (\sv_k\of{i}\sv_k\ofH{i}) \otimes \revise{\Xim_k},  \text{~and~} 
\muv_{k1}
= \hat{\zv}_k
= \mathbf{0}.
$
Finally, the initial messages from node $\psi_0$ to node $\rvVec{z}_k$ follows from \eqref{eq:C_k0} and \eqref{eq:mu_k0} as
$
\Cm_{k0}
= \sigma^2\Id_{NT} + \sum_{j\neq k}
\frac{1}{|\Sc_j|}\sum_{i=1}^{|\Sc_j|} (\sv_j\of{i}\sv_j\ofH{i}) \otimes \revise{\Xim_k}, \text{~and~}
\muv_{k0}
= \yv.
$

\subsection{The Algorithm}
We summarize the proposed EP scheme for probabilistic non-coherent detection in Algorithm~\ref{algo:EP}. 
In the end, according to~\eqref{eq:phat_beta} and \eqref{eq:phat_2k}, the estimated \revise{PMF} 
$\hat{p}_{\rvVec{s}_k|\rvMat{Y}}(\sv_k\of{i_k}|\Ym)$ is given by $\hat{p}_k({i}_k) = \hat{\pi}_k\of{{i}_k} \propto {\pi}_{k1}\of{{i}_{k}}{\pi}_{k2}\of{{i}_k}$, that is $\hat{p}_k({i}_k) = {\pi}_{k1}\of{{i}_{k}}$ since $\pi_{k2}\of{{i}_k}$ is constant. 
 The algorithm goes through the branches of the tree graph in Fig.~\ref{fig:factor_graph} in a round-robin manner. In each branch, the factor nodes are visited 
 from leaf to root. We note that other message passing schedules can be implemented. 

	\IncMargin{.2em}
\begin{algorithm}[h]
	\SetKwData{Left}{left}\SetKwData{This}{this}\SetKwData{Up}{up}
	\SetKwFunction{Union}{Union}\SetKwFunction{FindCompress}{FindCompress}
	\SetKwInOut{Input}{input}\SetKwInOut{Output}{output}

	\SetKwRepeat{Repeat}{repeat}{until}%
	\SetAlgoLined
	\KwIn{the observation $\Ym$; the constellations $\Sc_1,\dots,\Sc_K$;}
	\revise{set the maximal number of iterations $t_{\rm max}$ \;
	initialize of the messages $\{\pi_{k1}\of{i_k}\}_{i_k = 1}^{|\Sc_k|}, \muv_{k1}, \Cm_{k1},\muv_{k0},\Cm_{k0}$, for $k \in [K]$ \;}
	$t \longleftarrow 0$ \;
	\Repeat{\em convergence or $t = t_{\rm max}$}{
		$t \longleftarrow t+1$ \;
		\For{$k\leftarrow 1$ \KwTo $K$}{
			update $\big\{{\pi}_{k1}\of{i_k}\big\}_{i_k = 1}^{|\Sc_k|}$ according to~\eqref{eq:p1knew} and \eqref{eq:pihat_k1} \label{algo:line_update_pmf_ik} \; 
			compute $\{\hat{\zv}_{ki}\}_{i=1}^{|\Sc_k|}$ and $\{\Sigmam_{ki}\}_{i=1}^{|\Sc_k|}$ according to \eqref{eq:xhat_ki} and \eqref{eq:Sigma_ki}, respectively \label{algo:line_update_zhat_Sigma_ki} \; 
			compute $\hat{\zv}_k$ and $\Sigmam_k$ according to \eqref{eq:xhat_k} and \eqref{eq:Sigma_k}, respectively \;
			update $\muv_{k1}$ and $\Cm_{k1}$ according to \eqref{eq:mu_k1} and \eqref{eq:C_k1}, respectively \;
			update $\big\{\muv_{j0}\big\}_{j\ne k}$ and $\big\{\Cm_{j0}\big\}_{j\ne k}$ according to \eqref{eq:mu_k0} and \eqref{eq:C_k0}, respectively \label{algo:line_update_mu0_C0} \; 
		}
	}
	\KwRet{\em The \revise{PMF} $\big\{{\pi}_{k1}\of{i_k}\big\}_{i_k = 1}^{|\Sc_k|}$ of $\hat{p}_{\rvVec{s}_k|\rvMat{Y}}(\sv_k\of{i_k}|\Ym)$ for $k \in [K]$}
	\caption{EP for probabilistic non-coherent detection}
	\label{algo:EP}
\end{algorithm}
\DecMargin{.2em}

In the EP algorithm, \revise{the dominant operation is the update of $\pi_{k1}\of{i_k}$, $\Sigmam_{ki}$, and $\hat{\zv}_{ki}$, which involves the inverse of the $NT\times NT$ matrix  $(\sv_k\of{i_k}\sv_k\ofH{i_k}) \otimes \revise{\Xim_k} + \Cm_{k0}$ (with complexity $O(K^6)$) \emph{for all} $k\in [K]$ and $i_k \in [|\Sc_k|]$.} 
The complexity of computing $\hat{\zv}_k$, $\Sigmam_k$, $\muv_{k1}$, $\Cm_{k1}$, $\big\{\muv_{j0}\big\}_{j\ne k}$, and $\big\{\Cm_{j0}\big\}_{j\ne k}$ are all of lower order. 
Therefore, the complexity per iteration is given by $O(K^72^B)$. In order to reduce this complexity, we derive two simplifications of the EP scheme in the next section.

\section{Simplifications of the EP Detector} \label{sec:MMSE-SIA_EPAK}
\revise{In this section, we attempt to simplify EP by avoiding the inverse of $NT\times NT$ matrices.
}
\subsection{EP with Approximate Kronecker Products~(EPAK)}
\revise{ 
We observe that if $\Cm_{k0}$ could be expressed as a Kronecker product $\bar{\Cm}_{k0} \otimes \revise{\Xim_k}$ with $\bar{\Cm}_{k0} \in \CC^{T\times T}$, we could rewrite ${\pi}_{k1}\of{i_k}$ in \eqref{eq:pihat_k1} as
\begin{align}
{\pi}_{k1}\of{{i}_k}
&= \frac{ \Nc\big(\mathbf{0}; \muv_{k0},
	(\sv_k\of{{i}_k}\sv_k\ofH{{i}_k} + \bar{\Cm}_{k0}) 
	\otimes \revise{\Xim_k}\big) }
{\sum_{i=1}^{|\Sc_k|} \Nc\big(\mathbf{0}; \muv_{k0},
	(\sv_k\of{i}\sv_k\ofH{i} + \bar{\Cm}_{k0}) 
	\otimes \revise{\Xim_k}\big) }. \label{eq:Kron_app_pmf_ik}
\end{align}
Let $\Mm_{k0} \in \CC^{T\times N}$ s.t. $\muv_{k0} = {\rm vec} \big(\Mm_{k0}^\T\big)$, \eqref{eq:Kron_app_pmf_ik} could be computed efficiently using 
\begin{align}
& \Nc\left(\mathbf{0};\muv_{k0},\big(\sv_k\of{i_k}\sv_k\ofH{i_k} +\bar{\Cm}_{k0} \big)\otimes \revise{\Xim_k}\right) \notag\\
&\propto {\big(1 + \sv_k\ofH{i_k}\bar{\Cm}_{k0}^{-1}\sv_k\of{i_k}\big)^{-N}} \notag \\
&\quad \times \exp\bigg(\frac{\revise{\trace\big\{\bar{\Cm}_{k0}^{-1}\sv_k\of{i_k}\sv_k\ofH{i_k} \Mm_{k0} (\Xim_k^{-1})^\T \Mm_{k0}^\H\big\}}}{1+\sv_k\ofH{i_k}\bar{\Cm}_{k0}^{-1}\sv_k\of{i_k}}\bigg)
\label{eq:Kron_app_eff}
\end{align}
since only the $T\times T$ matrix $\bar{\Cm}_{k0}$ needs to be inverted \revise{(the inverse of $\Xim_k$ can be precomputed and stored)}. In general, $\Cm_{k0}$ does not have a Kronecker structure.} Thus we propose to fit $\Cm_{k0}$ to the form of a Kronecker product by solving the least squares problem
\begin{align}
\min_{\bar{\Cm}_{k0} \in \CC^{T\times T}} \|\Cm_{k0} - \bar{\Cm}_{k0} \otimes \revise{\Xim} \|^2_F \label{eq:approxC0}
\end{align}
as formulated in~\cite[Sec.~4]{van1993approximation}. Let $\Cm_{k0}\{i,j\}$ be the $N\times N$ sub-matrix containing the elements in rows from $(i-1)N + 1$ to $iN$ and columns from $(j-1)N+1$ to $jN$ of $\Cm_{k0}$. Let $\bar{c}_{ij}$ be the element in row $i$ and column $j$ of $\bar{\Cm}_{k0}$. It follows that 
\begin{align}
\lefteqn{
	\|\Cm_{k0} - \bar{\Cm}_{k0} \otimes \revise{\Xim_k} \|^2_F
}\nonumber \\
&= \sum_{i=1}^{T} \sum_{j=1}^{T} \|\Cm_{k0}\{i,j\} - \bar{c}_{ij} \revise{\Xim_k} \|^2_F \\
&= \sum_{i=1}^{T} \sum_{j=1}^{T} \|\Cm_{k0}\{i,j\}\|^2_F - \bar{c}_{ij}\trace[\Cm_{k0}\{i,j\}^\H\revise{\Xim_k}] 
\notag \\&\quad
- \bar{c}^*_{ij} \trace[\revise{\Xim_k}\Cm_{k0}\{i,j\}] + |\bar{c}_{ij}|^2\revise{\trace[\Xim_k^2]}.
\end{align}
Observe that $\|\Cm_{k0} - \bar{\Cm}_{k0} \otimes \revise{\Xim_k} \|^2_F$ is the sum of convex quadratic functions of $\bar{c}_{ij}$. Setting the partials $\frac{\partial \|\Cm_{k0} - \bar{\Cm}_{k0} \otimes \revise{\Xim_k} \|^2_F}{\partial \bar{c}_{ij}}$ to zeros, the optimal $\bar{\Cm}_{k0}$ is given by
\begin{align}
\bar{c}_{ij} = \revise{\frac{\trace[\Cm_{k0}\{i,j\} \revise{\Xim_k}]}{\trace[\Xim_k^2]}}.
\end{align}
With the approximation $\Cm_{k0} \approx \bar{\Cm}_{k0} \otimes \revise{\Xim_k}$, we can approximate ${\pi}_{k1}\of{i_k}$ by the RHS of \eqref{eq:Kron_app_pmf_ik}. 
Also, it follows from \eqref{eq:Sigma_ki} and \eqref{eq:xhat_ki} that 
\begin{align}
\Sigmam_{ki} &\approx \big[(\sv_k\of{i_k}\sv_k\ofH{i_k}) \big(\sv_k\of{i_k}\sv_k\ofH{i_k}
+ \bar{\Cm}_{k0} \big)^{-1} \bar{\Cm}_{k0}  \big] \otimes \revise{\Xim_k}, ~~~\label{eq:EPAK:Sigma_ik}\\
\hat{\zv}_{ki} &\approx {\rm vec}\big(\big[\sv_k\of{i_k}\sv_k\ofH{i_k} 
\big(\sv_k\of{i_k}\sv_k\ofH{i_k}
+ \bar{\Cm}_{k0} \big)^{-1} \Mm_{k0}\big]^\T\big). ~~~ \label{eq:EPAK:xhat_ik}
\end{align}
\revise{To compute $\Cm_{k1}$ and $\muv_{k1}$ in \eqref{eq:mu_k1} and \eqref{eq:C_k1}, the inversion of $\Cm_{k0}$ can be simplified as $\Cm_{k0}^{-1} \approx \bar{\Cm}_{k0}^{-1} \otimes \revise{\Xim_k^{-1}}$, but the inverse of $NT\times NT$ matrices involving $\Sigmam_k$ is still required.}

\revise{To keep an accurate message update at early iterations\footnote{\revise{In the uncorrelated fading case, i.e. $\Xim_k = \Id_N$, the approximation of $\Cm_{k0}$ with Kronecker products becomes more accurate when $\hat{\pi}_{k1}$ is closer to a Kronecker-delta distribution, i.e., we have high confidence in one of the symbols. This is likely the case at high SNR after some EP iterations. At early iterations, however, the approximation $\Cm_{k0} \approx \bar{\Cm}_{k0} \otimes \Xim$ can be inaccurate.}},}
let us fix a threshold $t_0 \in  [t_{\max}]$ and modify Algorithm~\ref{algo:EP} as follows. At iteration $t$, if $t\le t_0$, the messages are updated as in lines \ref{algo:line_update_pmf_ik}-\ref{algo:line_update_mu0_C0}; if $t> t_0$, in line~\ref{algo:line_update_pmf_ik}, \eqref{eq:pihat_k1} is replaced by \eqref{eq:Kron_app_pmf_ik} for the update of ${\pi}_{k1}\of{{i}_k}$, and in line~\ref{algo:line_update_zhat_Sigma_ki}, \eqref{eq:xhat_ki} and \eqref{eq:Sigma_ki} are replaced by \eqref{eq:EPAK:xhat_ik} and \eqref{eq:EPAK:Sigma_ik} for the update of $\Sigmam_{ki}$ and $\hat{\zv}_{ki}$, respectively.
We refer to this scheme as EPAK (EP with Approximate Kronecker). It coincides with EP if $t_0 = t_{\max}$.  At iteration $t > t_0$, the dominant operations in EPAK are the inverse of $\sv_k\of{i}\sv_k\ofH{i}
+ \bar{\Cm}_{k0}$ (with complexity $O(K^3)$) in \eqref{eq:EPAK:Sigma_ik} and \eqref{eq:EPAK:xhat_ik} for each $k \in [K]$ and $i \in [|\Sc_k|]$, \revise{and the inverse of $NT\times NT$ matrices (with complexity $O(K^6)$) to compute $\Cm_{k1}$ and $\muv_{k1}$ for each $k \in [K]$}. Thus the complexity at iteration $t$ of EPAK is $O(K^72^B)$ if $t\le t_0$ and $O(K^42^B + K^7)$ if $t > t_{0}$.

\subsection{\revise{Minimum Mean Square Error - Successive Interference Approximation~(MMSE-SIA)}}
\revise{Another method to simplify EP is as follows.} In the EP scheme, as in \eqref{eq:GM} and \eqref{eq:extrinsic}, the message $\Nc(\zv_k;\muv_{k1},\Cm_{k1})$ from node $\psi_{k1}$ to node $\rvVec{z}_k$ is derived by first projecting $\hat{p}\new_{k1,\rvVec{z}_k}(\zv_{k}) \propto
\sum_{i=1}^{|\Sc_k|} {\pi}_{k1}\of{i} \Nc(\zv_k;\hat{\zv}_{ki},\Sigmam_{ki})$ onto the Gaussian family, then dividing the projected Gaussian by $\Nc(\zv_{k};\muv_{k0},\Cm_{k0})$. If we skip the projection of $\hat{p}\new_{k1}(\zv_{k})$ onto the Gaussian family, i.e., we derive $\Nc(\zv_k;\muv_{k1},\Cm_{k1})$ by dividing directly $\hat{p}\new_{k1,\rvVec{z}_k}(\zv_{k})$ to $\Nc(\zv_{k};\muv_{k0},\Cm_{k0})$, 
then the mean $\muv_{k1}$ and covariance matrix $\Cm_{k1}$ are matched to that of the PDF proportional to
\begin{align} 
\frac{\hat{p}\new_{k1,\rvVec{z}_k}(\zv_{k})}{\Nc(\zv_{k};\muv_{k0},\Cm_{k0})} 
&= \sum_{i=1}^{|\Sc_k|} {\pi}_{k1}\of{i} \frac{
	\Nc(\zv_k;\hat{\zv}_{ki},\Sigmam_{ki})}
{\Nc(\zv_k;\muv_{k0},\Cm_{k0})} \\
&\propto \sum_{i=1}^{|\Sc_k|} {\pi}_{k1}\of{i} \Nc\big(\zv_k;\mathbf{0},( \sv_k\of{i}\sv_k\ofH{i})\otimes\revise{\Xim_k}\big) ~~~\label{eq:tmp1051} 
\\ &= \Nc\big(\zv_k;\mathbf{0},
\revise{\Rm_k \otimes\Xim_k} \big).
\end{align}
\revise{where $\Rm_k \defeq \sum_{i=1}^{|\Sc_k|} {\pi}\of{i}_{k1} \sv\of{i}_k\sv_k\ofH{i}$.} \eqref{eq:tmp1051} can be verified using $\Nc(\zv_k;\hat{\zv}_{ki},\Sigmam_{ki}) \propto \Nc\big(\zv_k;\mathbf{0},( \sv_k\of{i}\sv_k\ofH{i})\otimes\revise{\Xim_k}\big)\Nc(\zv_k;\muv_{k0},\Cm_{k0})$, which follows from the Gaussian PDF multiplication rule with $\hat{\zv}_{ki}$ and $\Sigmam_{ki}$ given in \eqref{eq:xhat_ki} and \eqref{eq:Sigma_ki}, respectively. It follows that $\muv_{k1}=\mathbf{0}$ and $\Cm_{k1} = \revise{\Rm_k\otimes\Xim_k}$. As a consequence (see \eqref{eq:mu_k0} and \eqref{eq:C_k0}), $\muv_{k0}=\yv$ and $\Cm_{k0} = \sigma^2\Id_{NT} + \sum_{l\ne k} \revise{\Rm_l\otimes \Xim_k}$. 

This scheme can be alternatively interpreted as follows. 
We expand $\rvVec{y}$ in \eqref{eq:y} as
\begin{align}
\rvVec{y} 
&= (\rvVec{s}_k\otimes \Id_N)\rvVec{h}_k + \sum_{l\ne k} (\rvVec{s}_l\otimes \Id_N)\rvVec{h}_l + \rvVec{w}.
\label{eq:y_vec}
\end{align}
The second term $\rvVec{t}_k \defeq \sum_{l\ne k} (\rvVec{s}_l\otimes \Id_N)\rvVec{h}_l$ is the interference from other users while decoding the signal of user $k$. 
Since the signals $\rvVec{s}_l$ are independent of the channels $\rvVec{h}_l$ and the channels $\rvVec{h}_l$ have zero mean, we have that $\E[\rvVec{t}_k] = \mathbf{0}$.
The covariance matrix of $\rvVec{t}_k$ is
$
\E[\rvVec{t}_k\rvVec{t}_k^\H] 
= \sum_{l\ne k} \E[\rvVec{s}_l \rvVec{s}_l ^\H] \otimes \revise{\Xim_k} 
= \sum_{l\ne k} \Rm_l \otimes \revise{\Xim_k}.
$
If we treat the interference term $\rvVec{t}_k$ as a Gaussian vector with the same mean and covariance matrix\footnote{Another choice is to treat each $\rvVec{s}_l$, $l\ne k$, as a Gaussian. With this choice, however, the interference term $\rvVec{t}_k$ is a product of Gaussians which makes the approximate single-user likelihood difficult to evaluate.}, then $\rvVec{t}_k + \rvVec{w} \sim \Nc\big(\mathbf{0}, \revise{\sum_{l\ne k} \Rm_l\otimes \Xim_k + \sigma^2\Id_{NT}} \big)$. The single-user likelihood under this approximation is
$\hat{p}_{\rvVec{y}| \rvVec{s}_k}(\yv|\sv_k) = \Nc\big(\yv; \mathbf{0},\revise{\sv_k\sv_k^\H \otimes \Xim_k + \sum_{l\ne k} \Rm_l \otimes \Xim_l + \sigma^2\Id_{NT}} \big)$. 
With this and Lemma~\ref{lemma:Gmult}, the update of the approximate posterior $\hat{p}_{\rvVec{s}_k|\rvVec{y}}(\sv_k|\yv) \propto \hat{p}_{\rvVec{y}| \rvVec{s}_k}(\yv|\sv_k)$ coincides with \revise{\eqref{eq:pihat_k1}} for $\muv_{k0}=\yv$ and \revise{$\Cm_{k0} = \sigma^2\Id_{NT} + \sum_{l\ne k} \revise{\Rm_l\otimes \Xim_k}$}. The matrix $\Rm_k$ is then recalculated with the updated value of $\hat{p}_{\rvVec{s}_k|\rvVec{y}}(\sv_k\of{i_k}|\yv)$, $i_k \in [|\Sc_k|]$. The matrices \revise{$\Cm_{l0}$} are updated accordingly, and then used to update $\hat{p}_{\rvVec{s}_l|\rvVec{y}}(\sv_l\of{i_l}|\yv)$, $i_l \in [|\Sc_l|]$, $l \ne k$.

In short, the derived simplification of the EP scheme above iteratively MMSE-estimates the signal $\rvVec{z}_k$ of one user at a time while treating the interference as Gaussian. At each iteration, the Gaussian approximation of the interference for each user is successively improved using the estimates of the signals of other users. We refer to this scheme as MMSE-\revise{SIA} and summarize it in Algorithm~\ref{algo:MMSE-SIA}. In particular, as for the EP scheme, we can start with the non-informative initialization $\hat{p}_{\rvVec{s}_k|\rvMat{Y}}(\sv|\Ym) = \frac{1}{|\Sc_k|}\mathbbm{1}\{\sv \in \Sc_k\}$. 

	\IncMargin{.2em}
	\begin{algorithm}[h] 
		\SetKwData{Left}{left}\SetKwData{This}{this}\SetKwData{Up}{up}
		\SetKwFunction{Union}{Union}\SetKwFunction{FindCompress}{FindCompress}
		\SetKwInOut{Input}{input}\SetKwInOut{Output}{output}

		\SetKwRepeat{Repeat}{repeat}{until}%
		\SetAlgoLined
		\KwIn{the observation $\Ym$; the constellations $\Sc_1,\dots,\Sc_K$;}
		\revise{set the maximal number of iterations $t_{\rm max}$ \;
		initialize of the posteriors $\hat{p}_{\rvVec{s}_k|\rvMat{Y}}(\sv_k|\Ym)$ for $\sv_k \in \Sc_k$, and $\Rm_k = \E_{\hat{p}_{\rvVec{s}_k|\rvMat{Y}}}[\rvVec{s}_k \rvVec{s}_k ^\H]$ for $k \in [K]$ \;}
		$t \longleftarrow 0$ \;
		\Repeat{\em convergence or $t = t_{\rm max}$}{
			$t \longleftarrow t+1$ \;
			\For{$k\leftarrow 1$ \KwTo $K$}{
				compute $\Cm_{k0} = \sigma^2\Id_{NT} + \sum_{l\ne k} \revise{\Rm_l\otimes \Xim_k}$ \label{line:update_Q}\;
				update $\hat{p}_{\rvVec{s}_k|\rvMat{Y}}(\sv_k|\Ym)$, $\sv_k \in \Sc_k$, according to \eqref{eq:pihat_k1} with $\muv_{k0}=\yv$ and \revise{${\Cm}_{k0}$ computed} \label{line:update_phat}\;
				update $\Rm_k = \E_{\hat{p}_{\rvVec{s}_k|\rvMat{Y}}}[\rvVec{s}_k \rvVec{s}_k ^\H]$
				\label{line:update_R} \;
			}
		}
		\KwRet{\em $\hat{p}_{\rvVec{s}_k|\rvMat{Y}}(\sv_k|\Ym)$ for $\sv_k \in \Sc_k$, $k \in [K]$}
		\caption{MMSE-\revise{SIA} for probabilistic non-coherent detection}
		\label{algo:MMSE-SIA}
	\end{algorithm}
	\DecMargin{.2em}

\revise{The complexity order of Algorithm~\ref{algo:MMSE-SIA} is the same as EP due to the $NT\times NT$ matrix inversion in~\eqref{eq:pihat_k1}. However, MMSE-\revise{SIA} still has complexity advantage over EP since no other matrix inversion is required, and there is no need to compute $\{\hat{\zv}_{ki}\}$, $\{\Sigmam_{ki}\}$, $\hat{\zv}_k$, $\Sigmam_k$, or update $\muv_{k1}$. 
If the channel is uncorrelated ($\Xim_k = \Id_N$), the complexity order of MMSE-\revise{SIA} can be reduced. In this case, $\Cm_{k0}$ is the Kronecker product $\Qm_k \otimes \Id_N$ with $\Qm_k \defeq \sum_{l=1,l\ne k}^K\Rm_l + \sigma^2\Id_T$, and thus in \eqref{eq:pihat_k1},  
\begin{align}
& \Nc\big(\mathbf{0}; \muv_{k0},
(\sv_k\of{{i}_k}\sv_k\ofH{{i}_k}) 
\otimes \revise{\Xim_k} + \Cm_{k0}\big) \\ 
&= \Nc\left(\mathbf{0};\rvVec{y},\big(\sv_k\of{i_k}\sv_k\ofH{i_k} +\Qm_k \big)\otimes \Id_N\right) \\
&\propto \big(1+\sv_k\ofH{i_k}\Qm_k^{-1}\sv_k\of{i_k}\big)^{-N} \exp\bigg(\frac{\big\|\rvMat{Y}^\H\Qm_k^{-1}\sv_k\of{i_k}\big\|^2}{1+\sv_k\ofH{i_k}\Qm_k^{-1}\sv_k\of{i_k}}\bigg). ~~~~ \label{eq:tmp1264}
\end{align}
}
Then, only the inverse of $\Qm_k$ is computed, which requires $O(K^3)$ operations. Given $\Qm_k^{-1}$, the complexity of computing the RHS of \eqref{eq:tmp1264} is then $O(K^2)$ for each $i_k \in [|\Sc_k|]$. Therefore, the complexity of computing $\hat{p}_{\rvVec{s}_k|\rvMat{Y}}(\sv_k|\Ym)$ is $O(K^3 + K^22^B)$ for $k \in [K]$. 
Finally, the complexity per iteration of the MMSE-\revise{SIA} algorithm for uncorrelated fading is given by $O(K^4 + K^32^B)$.

\section{Implementation Aspects} \label{sec:implementation}
\subsection{Complexity} \label{sec:complexity}

We summarize the computational complexity of the considered schemes in Table~\ref{tab:complexity}. 

\begin{table}[ht]
	\begin{center}
	\revise{
	\caption{Complexity order of different non-coherent detectors with $T = O(K), N = O(K)$, and $|\Sc_k| = O(2^B)$, $k\in[K]$}
	\centering
	\def\arraystretch{1.2}
	\begin{tabular}{|p{1.9cm}|c|c|}
		\hline
		\multirow{2}{*}{\bf Detector} & \multicolumn{2}{c|}{\bf Complexity order}
		\\ \cline{2-3}
		& Correlated fading & Uncorrelated fading $\Xim_k \!=\! \Id_N, \!\forall k\!$
		\\ \hline \hline
		Optimal (exact marginalization) & $O(K^62^{BK})$ & $O(K^32^{BK})$  \\ \hline
		EP & \multicolumn{2}{c|}{$O(K^72^Bt_{\rm max})$} \\ \hline
		EPAK & \multicolumn{2}{c|}{$O\big(K^72^B t_0 + (K^4 2^B + K^7)(t_{\rm max}-t_0)\big)$} \\ \hline
		MMSE-\revise{SIA} & $O(K^72^Bt_{\rm max})$ & $O(K^4 t_{\rm max} + K^3 2^B t_{\rm max})$ \\ \hline
	\end{tabular}
	\\	\vspace{.1cm}
		\scriptsize $t_{\max}$ denotes the number of iterations. $t_0 \in [t_{\max}]$.
	\label{tab:complexity}
	}
	\end{center}
\end{table}

\subsection{Stabilization}
We discuss some possible numerical problems in the EP algorithm and our solutions. 
\subsubsection{\revise{Singularity of $\Sigmam_k$}}
First, 
in \eqref{eq:Sigma_k}, since the $NT\times NT$ matrix $\Sigmam_k$ is the weighted sum of the terms of rank less than $NT$, it can be close to singular if at a certain iteration, only few of the weights $\pi_{k1}\of{i}$ are sufficiently larger than zero. The singularity of $\Sigmam_k$ can also arise from the constellation structure. For example, the constellations proposed in~\cite{HoangAsilomar2018multipleAccess} are precoded versions of a constellation in $G(\CC^{T-K+1},1)$ and the maximal rank of $\Sigmam_k$ is $N(T-K+1) \le NT$. To avoid the inverse of $\Sigmam_k$, we express $\Cm_{k1}$ in \eqref{eq:C_k1} and $\muv_{k1}$ in \eqref{eq:mu_k1} respectively as
\begin{align}
\Cm_{k1}
&= -\Cm_{k0}\big(\Sigmam_{k}-\Cm_{k0}\big)^{-1}\Sigmam_{k}, \label{eq:tmp1334} \\
\muv_{k1}
&=\Cm_{k0}\big(\Sigmam_{k}  - \Cm_{k0}\big)^{-1} \bigg(\Sigmam_{k}  -  \sum_{i=1}^{|\Sc_k|} \pi_{k1}\of{i} \Sigmam_{ki}\bigg) \Cm_{k0}^{-1} \muv_{k0}.
\end{align}

\subsubsection{\revise{``Negative variance''}} Another problem is that $\Cm_{k1}$
is not guaranteed to be positive definite even if both $\Cm_{k0}$ and $\Sigmam_{k}$ are. When $\Cm_{k1}$ is not positive definite, from \eqref{eq:C_k0}, $\Cm_{k0}$ can have negative eigenvalues, which, through \eqref{eq:pihat_k1}, can make $\hat{\pi}_{k1}\of{{i}_k}$ become close to a Kronecker-delta distribution (even at low SNR) where the position of the mode can be arbitrary, and the algorithm may diverge. Note that this ``negative variance'' problem is common in EP (see, e.g.,~\cite[Sec.~3.2.1]{Minka:Diss:01}, \cite[Sec.~5.3]{vehtari2014expectation}). There has been no generally accepted solution and one normally resorts to various heuristics adapted to each problem. In our problem, to control the eigenvalues of $\Cm_{k1}$, we modify \eqref{eq:tmp1334} by first computing the eigendecomposition $-\Cm_{k0}\big(\Sigmam_{k}-\Cm_{k0}\big)^{-1}\Sigmam_{k} = \Vm \Lambdam \Vm^{-1}$, then computing $\Cm_{k1}$ as 
$
\Cm_{k1} = \Vm |\Lambdam| \Vm^{-1},
$
where $|\Lambdam|$ is the element-wise absolute value of $\Lambdam$. This manipulation of replacing the variance parameters by their absolute values was also used in~\cite{Sun2015iterativeReceiverISI}.

\subsubsection{\revise{Overconfidence at early iterations}}
Finally, due to the nature of the message passing between continuous and discrete distribution, it can happen that all the mass of the \revise{PMF} $\hat{\pi}_{k1}\of{{i}_k}$ is concentrated on a small region of a potentially large constellation $\Sc_k$. For example, if ${\pi}_{k1}\of{{i}_k}$ is close to a Kronecker-delta distribution with a single mode at $i_0$, then \eqref{eq:xhat_ki} and \eqref{eq:Sigma_ki} implies that $\Sigmam_k$ is approximately $\Sigmam_{ki_0}$, and then from \eqref{eq:C_k1}, $\Cm_{k1} \approx (\sv_k\of{i_0}\sv_k\ofH{i_0})\otimes \revise{\Xim_k}$. In this case, almost absolute certainty is placed on the symbol $\sv_k\of{i_0}$, and the algorithm will not be able significantly update its belief in the subsequent iterations. This can be problematic when the mode of ${\pi}_{k1}\of{{i}_k}$ is placed on the wrong symbol at early iterations. To smooth the updates, we apply damping on the update of the parameters of the continuous distributions $\Nc(\zv_k;\muv_{k1},\Cm_{k1})$ and $\Nc(\zv_k;\muv_{k0},\Cm_{k0})$. That is, with a damping factor $\eta \in [0;1]$, at iteration $t$ and for each user $k$, we update 
\begin{align}
\Cm_{k1}(t)
&
= \eta \Vm(t) |\Lambdam(t)| \Vm^{-1}(t) + (1-\eta)\Cm_{k1}(t-1), \label{eq:C_k1_damped} 
\\
\muv_{k1}(t)
&
= \eta \Cm_{k0}(t - 1)\big(\Sigmam_{k}(t) - \Cm_{k0}(t - 1)\big)^{-1} 
\notag \\&\quad \times 
\bigg(\Sigmam_{k}(t)  -  \sum_{i=1}^{|\Sc_k|} \pi_{k1}\of{i}(t) \Sigmam_{ki}(t)\bigg) \Cm_{k0}^{-1}(t \!-\! 1) \muv_{k0}(t \!-\! 1) \notag \\
&\quad+ (1-\eta) \muv_{k1}(t-1), \label{eq:mu_k1_damped} \\
\Cm_{l0}(t)
&= \eta\bigg(\sigma^2\Id_{NT} + \sum_{j\neq l}\Cm_{j1}(t) \bigg) + (1-\eta)\Cm_{l0}(t-1), 
\notag \\&
\quad \forall l \ne k, \label{eq:C_k0_damped}
\\
\muv_{l0}(t)
&= \eta\bigg(\yv- \sum_{j\neq l}\muv_{j1}(t)\bigg) + (1-\eta)\muv_{l0}(t - 1), \quad \forall l \ne k. \label{eq:mu_k0_damped}
\end{align}

In short, we stabilize the EP message updates by replacing \eqref{eq:C_k1_damped}, \eqref{eq:mu_k1_damped}, \eqref{eq:C_k0_damped}, and \eqref{eq:mu_k0_damped} for \eqref{eq:C_k1}, \eqref{eq:mu_k1}, \eqref{eq:C_k0}, and \eqref{eq:mu_k0}, respectively. This technique also applies to EPAK. For MMSE-\revise{SIA}, we damp the update of $\Qm_k$ and $\Rm_k$ in a similar manner as $\Qm_k(t) = \eta\big(\sum_{l\ne k}\Rm_l(t-1) + \sigma^2\Id_T\big) + (1-\eta)\Qm_k(t-1)$ and $\Rm_k(t) = \eta \sum_{i_k =1}^{|\Sc_k|}\pi_{k1}\of{i_k}(t)\sv_k\of{i_k} \sv_k\ofH{i_k} + (1-\eta)\Rm_k(t-1).$ Note that damping does not change the complexity order of these schemes.
The approaches described in this subsection were implemented for the numerical results in the next section.

\section{Performance Evaluation} \label{sec:performance}
In this section, we evaluate the performance of our proposed schemes for a given set of individual constellations. We assume that $B_1 = \dots B_K \eqdef B$.
\revise{We consider the local scattering model~\cite[Sec. 2.6]{Emil2017_massivemimobook} for the correlation matrices $\Xim_k$. Specifically, the $(l,m)$-th element of $\Xim_k$ is generated as
$
[\Xim_k]_{l,m} = \xi_k \E_{\delta_k}[\exp(2\pi d_H (l-m)\sin(\varphi_k + \delta_k))],
$
where $d_H$ is the antenna spacing in the receiver array (measured in number of wavelengths), $\varphi_k$ is a deterministic nominal angle, and $\delta_k$ is a random deviation. We consider $d_H = \frac12$, $\varphi_k$ generated uniformly in $[-\pi,\pi]$, and $\delta_k$ uniformly distributed in $[-\sqrt{3}\sigma_\varphi, \sqrt{3}\sigma_\varphi]$ with angular standard deviation $\sigma_\varphi = 10^\circ$. We also consider $\xi_k = 1, \forall k.$ We set a damping factor $\eta = 0.9$ for EP, EPAK, and MMSE-SIA.}

\subsection{Test Constellations, State-of-the-Art Detectors, and Benchmarks}
\subsubsection{\revise{Precoding-Based Grassmannian Constellations}}
We consider the constellation design in~\cite{HoangAsilomar2018multipleAccess}, which imposes a geometric separation between the individual constellations through a set of precoders $\Um_k$, $k\in[K]$. Specifically, starting with a Grassmannian constellation 
$\Dc = \big\{\dv\of{1},\dots,\dv\of{2^B}\}$ in $G(\CC^{T-K+1},1)$, the individual constellation $\Sc_k$ is generated as 
\begin{align}
\sv_k\of{i} = \frac{\Um_k\dv\of{i}}{\|\Um_k\dv\of{i}\|}, \quad i\in [2^B]. 
\end{align}
We consider the precoders $\Um_k$ defined in~\cite[Eq.(11)]{HoangAsilomar2018multipleAccess} and two candidates for $\Dc$:
\begin{itemize}[leftmargin=*]
	\item A numerically optimized constellation generated by solving the max-min distance criteria
	\begin{align} \label{eq:max_min_chordal}
	\max_{\dv\of{i} \in G(\CC^{T-K+1},1), i = 1,\dots, 2^B} \ \min_{1\le i < j \le 2^B} d(\dv\of{i},\dv\of{j}), \quad
	\end{align}
	where $d(\dv\of{i},\dv\of{j}) \defeq \sqrt{1-|\dv\ofH{i}\dv\of{j}|^2}$ is the chordal distance between two Grassmannian points represented by $\dv\of{i}$ and $\dv\of{j}$. A constellation with maximal minimum pairwise distance leads to low symbol error rate in the absence of the interference. In our simulation, we approximate \eqref{eq:max_min_chordal} by $\min_{\Dc} \log \sum_{1\le i < j \le 2^B} \exp \big(\frac{|\dv\ofH{i}\dv\of{j}|}{\epsilon}\big)$ with a small $\epsilon$ for smoothness, then solve it using gradient descent on the Grassmann manifold using the Manopt toolbox~\cite{manopt}. 
	
	\item The cube-split constellation proposed in~\cite{hoangAsilomar2017cubesplit,Hoang_cubesplit_journal}. This structured constellation has good distance properties and allows for low-complexity single-user decoding and a simple yet effective binary labeling scheme.
\end{itemize}

Exploiting the precoder structure, \cite{HoangAsilomar2018multipleAccess} introduced a  detector~\cite[Sec.~V-B-3]{HoangAsilomar2018multipleAccess} that iteratively mitigates interference by projecting the received signal onto the subspace orthogonal to the interference subspace. We refer to it as POCIS~(Projection onto the Orthogonal Complement of the Interference Subspace). For each user $k$, POCIS first estimates the row space of the interference $\sum_{l\ne k} \rvVec{s}_l \rvVec{h}_l^\T$ based on the precoders and projects the received signal onto the orthogonal complement of this space. It then performs single-user detections to obtain point estimates of the transmitted symbols. From these estimates, POCIS estimates the column space of the interference and projects the received signal onto its orthogonal complement. This process is repeated in the next iteration. 
The complexity order of POCIS is equivalent to the MMSE-\revise{SIA} scheme.
Note that only the indices of the estimated symbols are passed in POCIS, as opposed to the soft information on the symbols as in EP, MMSE-\revise{SIA}, and EPAK. 

\revise{
\subsubsection{Pilot-Based Constellations} \label{sec:pilot-based}
We also consider the pilot-based constellations in which the symbols are generated as $\sv_k\of{i} = \Big[\sqrt{\frac{K}{T}}\ev_k^\T \ \sqrt{\frac{T-K}{T P_{\rm avg}}}\tilde{\sv}_k\ofT{i}\Big]^\T$ where $\ev_k$ is the $k$-th column of $\Id_K$, $\tilde{\sv}_k\of{i}$ is a vector of data symbols taken from a scalar constellation, such as QAM, and $P_{\rm avg}$ is the average symbol power of the considered scalar constellation. Note that this corresponds to the scenario where the $K$ users transmit mutually orthogonal pilot sequences, followed by spatially multiplexed parallel data transmission. Many MIMO detectors have been proposed specifically for these constellations. We consider some representatives as follows.
\begin{itemize}[leftmargin=*]
	\item 
	The receiver MMSE-estimates the channel based on the first $K$ rows of $\rvMat{Y}$, then MMSE-equalizes the received data symbols in the remaining $T-K$ rows of $\rvMat{Y}$, and performs a scalar demapper on the equalized symbols.
	
	\item The receiver MMSE-estimates the channel, then decodes the data symbols using the  Schnorr-Euchner sphere decoder~\cite{Schnorr1994lattice}, referred to as SESD.
	
	\item The receiver performs the semi-blind joint ML channel estimation and data detection scheme in \cite{Hanzo2008semiblind} with repeated weighted boosting search~(RWBS) for channel estimation and the Schnorr-Euchner sphere decoder for data detection, referred to as RWBS-SESD.
\end{itemize}
We note that the sphere decoder has near optimal performance given the channel knowledge, but its complexity is non-deterministic and can be exponential in the channel dimension if the channel matrix is ill-conditioned. 
}

\subsubsection{\revise{Benchmarks}}
We consider the optimal ML detector, whenever it is feasible, as a benchmark. When the optimal detector is computationally infeasible, we resort to another benchmark consisting in giving the receiver, while it decodes the signal $\rvVec{s}_k$ of user $k$, the knowledge of the signals $\rvVec{s}_l$ (but not the channel $\rvVec{h}_l$) of all the interfering users $l \ne k$. With this genie-aided information, optimal ML decoding~\eqref{eq:MLdecoder} can be performed by keeping $\rvVec{s}_l$ fixed for all $l\ne k$ and searching for the best $\rvVec{s}_k$ in $\Sc_k$, thus reducing the total search space size from $2^{BK}$ to $K2^B$. The posterior marginals are computed separately for each user accordingly. This genie-aided detector gives an upper bound on the performance of EP, MMSE-\revise{SIA}, EPAK, and POCIS. 



\revise{
\subsection{Convergence and Running Time}
To assess the convergence of the algorithms, we evaluate the total variation distance between the estimated marginal posteriors $\hat{p}_{\rvVec{s}_k|\rvMat{Y}}$ at each iteration and the exact marginal posteriors $p_{\rvVec{s}_k|\rvMat{Y}}$ when exact marginalization \eqref{eq:exact_marginalization} is possible. The total variation distance between two probability measures $P$ and $Q$ on $\Xc$ is defined as $\Tc\Vc(P,Q) \defeq \frac{1}{2}\sum_{x\in\Xc} |P(x) - Q(x)|$. At iteration $t$ where the estimated posteriors are $\hat{p}^{(t)}_{\rvVec{s}_k|\rvMat{Y}}$, $k\in [K]$, we evaluate the average total variation distance as
\begin{align}
\Delta_t = \frac{1}{K} \sum_{k=1}^{K} \E_{\rvMat{Y}}[\Tc\Vc(\hat{p}^{(t)}_{\rvVec{s}_k|\rvMat{Y}},p_{\rvVec{s}_k|\rvMat{Y}})].
\end{align}

We consider the precoding-based Grassmannian constellations. Fig.~\ref{fig:totalVar} shows the empirical average total variation $\Delta_t$ for $T=6$, $K = 3$, $N = 4$, and $B = 4$ at $\SNR = 8$~dB. As can be seen, at convergence, EP provides the most accurate estimates of the marginal posteriors although it is less stable than other schemes. EP converges after $6$ iterations while MMSE-SIA converges after $5$ iterations. For uncorrelated fading, EPAK with $t_0=2$ can be eventually better than MMSE-SIA, but converges slower. For correlated fading, EPAK totally fails because of the inaccuracy of the approximation with Kronecker products. POCIS converges very quickly after $2$ iterations but achieves a relatively low accuracy of the posterior estimation.  
\begin{figure}[!th] 
	\centering
	\hspace{-.3cm}
	\subfigure[Uncorrelated fading $\Xim_k = \Id_N$]{\includegraphics[width=.52\textwidth]{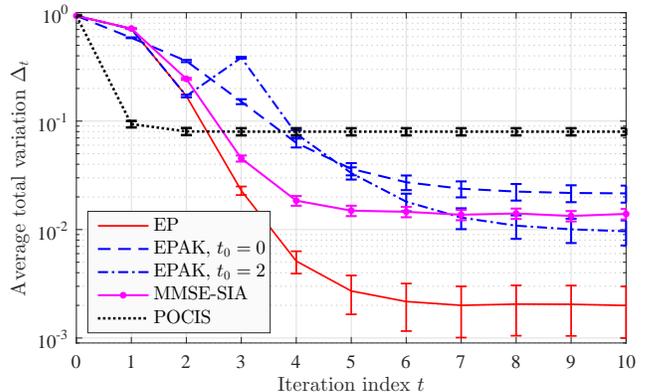}}
	\hspace{-.7cm}
	\subfigure[Correlated Fading]{\includegraphics[width=.52\textwidth]{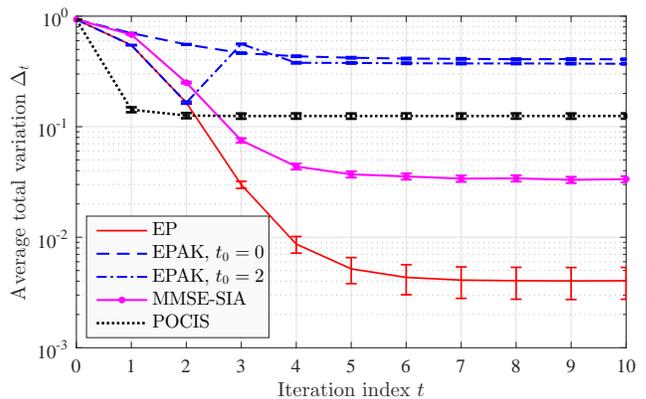}}
	\hspace{-.5cm}
	\caption{The empirical average total variation $\Delta_t$ over $1000$ realizations of the transmitted signal, channel, and noise versus iteration for different non-coherent soft detection schemes for $T=6$, $K = 3$, $B = 4$, and $N = 4$ at $\SNR = 8$~dB. The error bars show the standard error, which is the standard deviation normalized by the square root of the number of samples. For correlated fading, these figures are further averaged over $10$ realizations of the correlation matrices.
	}
	\label{fig:totalVar}
\end{figure}

Fig.~\ref{fig:running_time} depicts the average running time (on a local server) of exact marginalization compared with $6$ iterations of EP, EPAK, MMSE-SIA, and POCIS at $\SNR = 8$~dB. These schemes have significantly lower computation time than exact marginalization. The running time saving of EPAK w.r.t. EP is not significant, even with $t_0 = 0$.
For uncorrelated fading, MMSE-SIA has much shorter running time than all other schemes. 
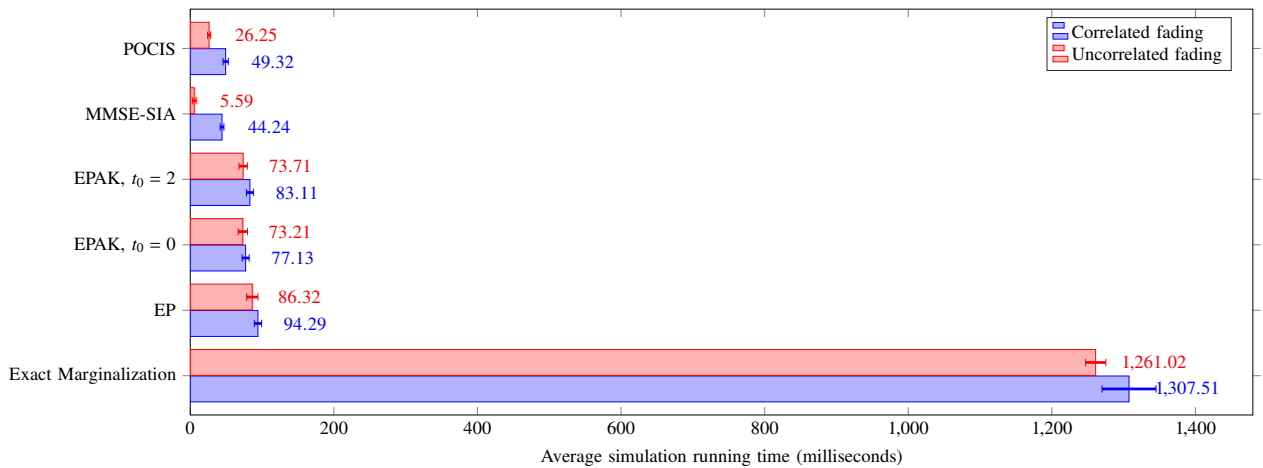
\begin{figure*}[th!] 
	\centering
	\begin{tikzpicture}[scale=.7]
	\begin{axis}[
	xbar = .3,
	xmin=0,
	xmax=1480,
	width=1.2\textwidth,
	height=0.38\textheight,
	enlarge y limits = 0.12,
	bar width=14pt,
	xlabel={Average simulation running time (milliseconds)},
	symbolic y coords={{Exact Marginalization},{EP},{EPAK, $t_0=0$},{EPAK, $t_0=2$},{MMSE-SIA},{POCIS}},
	xtick={0,200,400,600,800,1000,1200,1400},
	nodes near coords,
	every node near coord/.append style={xshift=0.37cm},
	legend style={ anchor=north east, legend cell align=left, align=left, draw=white!15!black, 
	}
	]
	
	\addplot[fill=blue!30, draw=blue, nodes near coords style={color=blue}, error bars/.cd,
	x dir=both,
	x explicit,
	error bar style={line width=1.5pt,color=blue}] coordinates {
		(1307.509645800003,Exact Marginalization) +- (0037.611235654106,0)
		(0094.2926294,EP) +- (0005.113151754871,0)
		(0077.1268068,{EPAK, $t_0=0$}) +- (0004.943069922073,0)
		(0083.1117580,{EPAK, $t_0=2$}) +- (0004.840391362472,0)
		(0044.24469840,MMSE-SIA) +- (0002.635243995103,0)
		(0049.31881420,POCIS) +- (0003.671752066412,0)
	};
	\addlegendentry{Correlated fading}
	
	\addplot[fill=red!30, draw=red, nodes near coords style={color=red}, error bars/.cd,
	x dir=both,
	x explicit,
	error bar style={line width=1.5pt,color=red}] coordinates {
		(1261.015342000001,Exact Marginalization) +- (0014.167520097206,0)
		(0086.315282000000,EP) +- (0007.888196338832,0)
		(0073.2118220,{EPAK, $t_0=0$}) +- (0006.491571935403,0)
		(0073.711169000,{EPAK, $t_0=2$}) +- (0005.849913196009,0)
		(0005.59252900,MMSE-SIA) +- (0002.669959467617,0)
		(0026.24637900,POCIS) +- (0002.140511977281,0)
	};
	\addlegendentry{Uncorrelated fading}
	
	\end{axis}
	\end{tikzpicture}
	\caption{The average running time over $1000$ realizations of the transmitted signal, channel, and noise of exact marginalization vs. $6$ iterations of the considered detection schemes for $T=6$, $K = 3$, $B = 4$, and $N = 4$ at $\SNR = 8$~dB. The error bars show the standard deviation. For correlated fading, the running time is further averaged over $10$ realizations of the correlation matrices.}
	\label{fig:running_time}
		\vspace{-.cm}
\end{figure*}

From these convergence behaviors, hereafter, we fix the number of iterations of EP, MMSE-SIA, and EPAK as $6$ and of POCIS as $3$. Furthermore, we consider EPAK only for uncorrelated fading. For correlated fading, we generate the correlation matrices once and fix them over the simulation.
}

\subsection{Achievable Rate}
We first plot the achievable mismatched sum-rate $\GMI$ of the system calculated as in~\eqref{eq:approx_achievableRate} for $T = 6$, $K = 3$, $N =4$ and $B \in \{4,8\}$ in Fig.~\ref{fig:Rate}. \revise{We consider the precoding-based Grassmannian constellations. For $\Dc$, we use the numerically optimized constellation if $B= 4$ and the cube-split constellation if $B=8$. For uncorrelated fading (Fig.~\ref{fig:Rate_uncorrelated}, the rates achieved with EP and MMSE-SIA detectors are very close to the achievable rate of the system (with the optimal detector) and not far from that of the genie-aided detector. EPAK (with $t_0 = 2$) achieves a very low rate, especially in the low SNR regime where the Kronecker approximation is not accurate. For correlated fading, (Fig.~\ref{fig:Rate_correlated}), the rates achieved with EP and MMSE-SIA are only marginally lower than that of the optimal detector and genie-aided detector. In both cases, the rate achieved with POCIS is lower than that of EP and MMSE-SIA in the lower SNR regime and converges slowly with SNR to the limit $\frac{BK}{T}$ bits/channel use.} 

\begin{figure}[!h]  
	\centering
	\hspace{-.5cm}
	\subfigure[Uncorrelated fading]{\includegraphics[width=.52\textwidth]{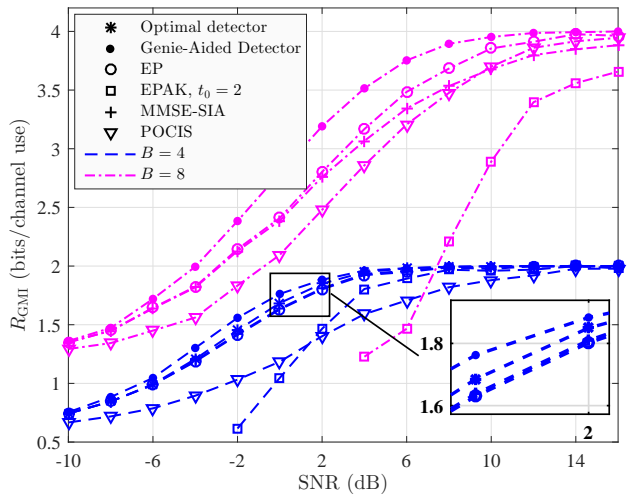}
	\label{fig:Rate_uncorrelated}}
	\hspace{-.9cm}
	\subfigure[Correlated fading]{\includegraphics[width=.52\textwidth]{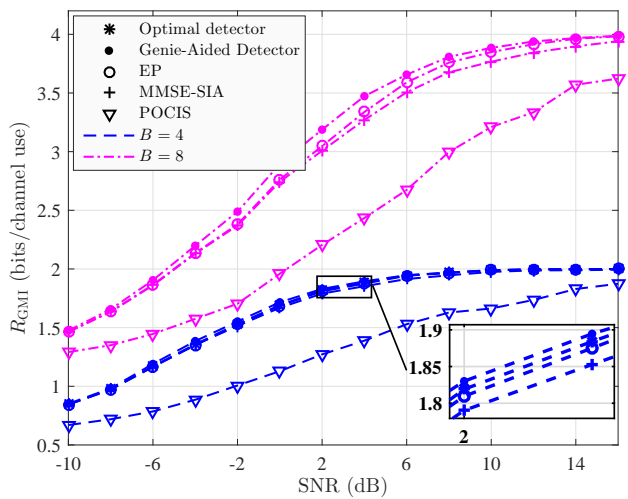} \label{fig:Rate_correlated}}
	\hspace{-.7cm}
	\caption{The mismatched rate of the system with EP, EPAK (with $t_0 = 2$), MMSE-\revise{SIA}, and POCIS detectors in comparison with the optimal detector and/or the genie-aided detector for $T=6$, $K=3$, $N = 4$, and $B \in \{4,8\}$. 
	}
	\label{fig:Rate}
\end{figure}

\subsection{Symbol Error Rates of Hard Detection}
Next, we use the outputs of EP, EPAK, MMSE-\revise{SIA} and POCIS for a maximum-a-posteriori~(MAP) hard detection. We evaluate the performance in terms of symbol error rate~(SER).

In Fig.~\ref{fig:SER_T6K3_B4}, we consider \revise{the precoding-based constellations} with $T = 6$, $K = 3$, $N \in \{4,8\}$, and $B =4$, for which the optimal ML detector~\eqref{eq:MLdecoder} is computationally feasible. We observe that the SER of the EP and MMSE-\revise{SIA} detectors are not much higher than that of the optimal detector, especially in the lower SNR regime. The SER of EPAK is significantly higher than that of EP and MMSE-\revise{SIA} for $t_0 = 0$. This is greatly improved by setting $t_0=2$, i.e., keeping the first two iterations of EP. The gain of EP w.r.t. EPAK and MMSE-\revise{SIA} is more pronounced when the SNR increases. 
\revise{For correlated fading, EP performs almost as good as the optimal detector, whose SER performance is closely approximated by the genie-aided detector.}
\begin{figure}[!h] 
	\centering
	\hspace{-.5cm}
	\subfigure[Uncorrelated fading]{\includegraphics[width=.52\textwidth]{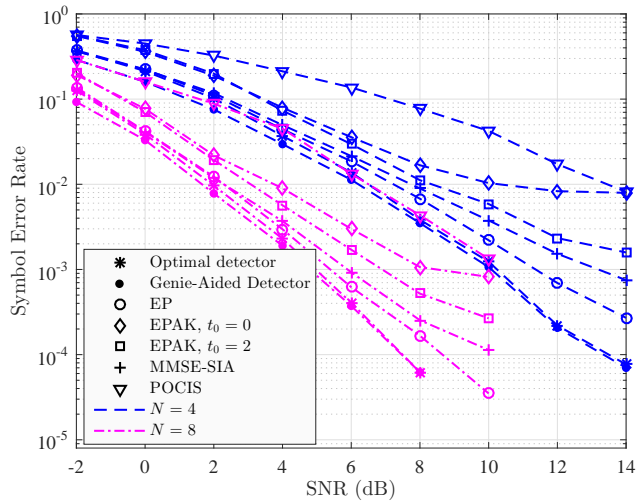}}
	\hspace{-.7cm}
	\subfigure[Correlated fading]{\includegraphics[width=.52\textwidth]{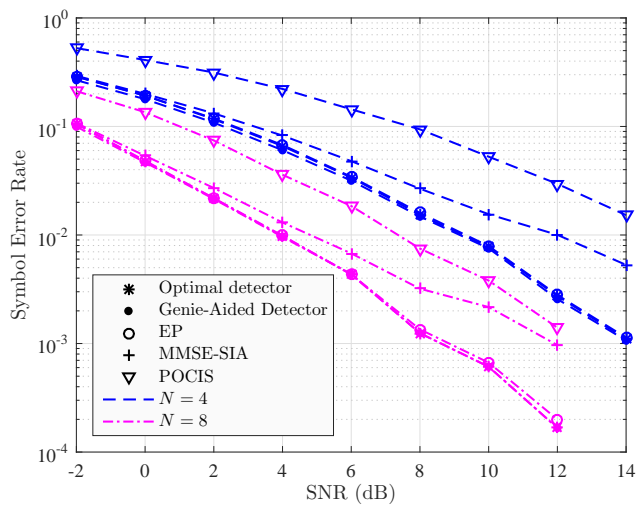}}
	\hspace{-.7cm}
	\caption{The symbol error rate of the system with EP, EPAK (with $t_0 \in \{0,2\}$), MMSE-\revise{SIA}, and POCIS detectors in comparison with the optimal detector and the genie-aided detector for $T=6$, $K=3$, $N \in \{4,8\}$ and $B = 4$. 
	}
	\label{fig:SER_T6K3_B4}
\end{figure}

In Fig.~\ref{fig:SER_T6K3_B9N8}, we consider $T = 6$, $K = 3$, $N = 8$, and $B =9$ and use the genie-aided detector as a benchmark. \revise{In Fig.~\ref{fig:SER_T6K3_B9N8_uncorrelated}, we consider uncorrelated fading and use the pilot-based constellations with $8$-QAM data symbols.} The performance of EP is very close to that of the genie-aided detector. The performance of MMSE-\revise{SIA} is close to EP in the low SNR regime ($\SNR \le 8$~dB). 
We also depict the SER of \revise{the three pilot-based detectors in Section~\ref{sec:pilot-based}, namely, 1) MMSE channel estimation, MMSE equalizer, and QAM demapper, 2) SESD, and 3) RWBS-SESD.
These three schemes are outperformed by the EP detectors. In Fig.~\ref{fig:SER_T6K3_B9N8_correlated}, we consider correlated fading and use the precoding-based Grassmannian constellations with $\Dc$ numerically optimized. We observe again that EP achieves almost the same SER performance as the genie-aided detector.} 
\begin{figure}[!h] 
	\centering
	\vspace{-.4cm}
	\subfigure[Uncorrelated fading, pilot-based constellations]{\includegraphics[width=.52\textwidth]{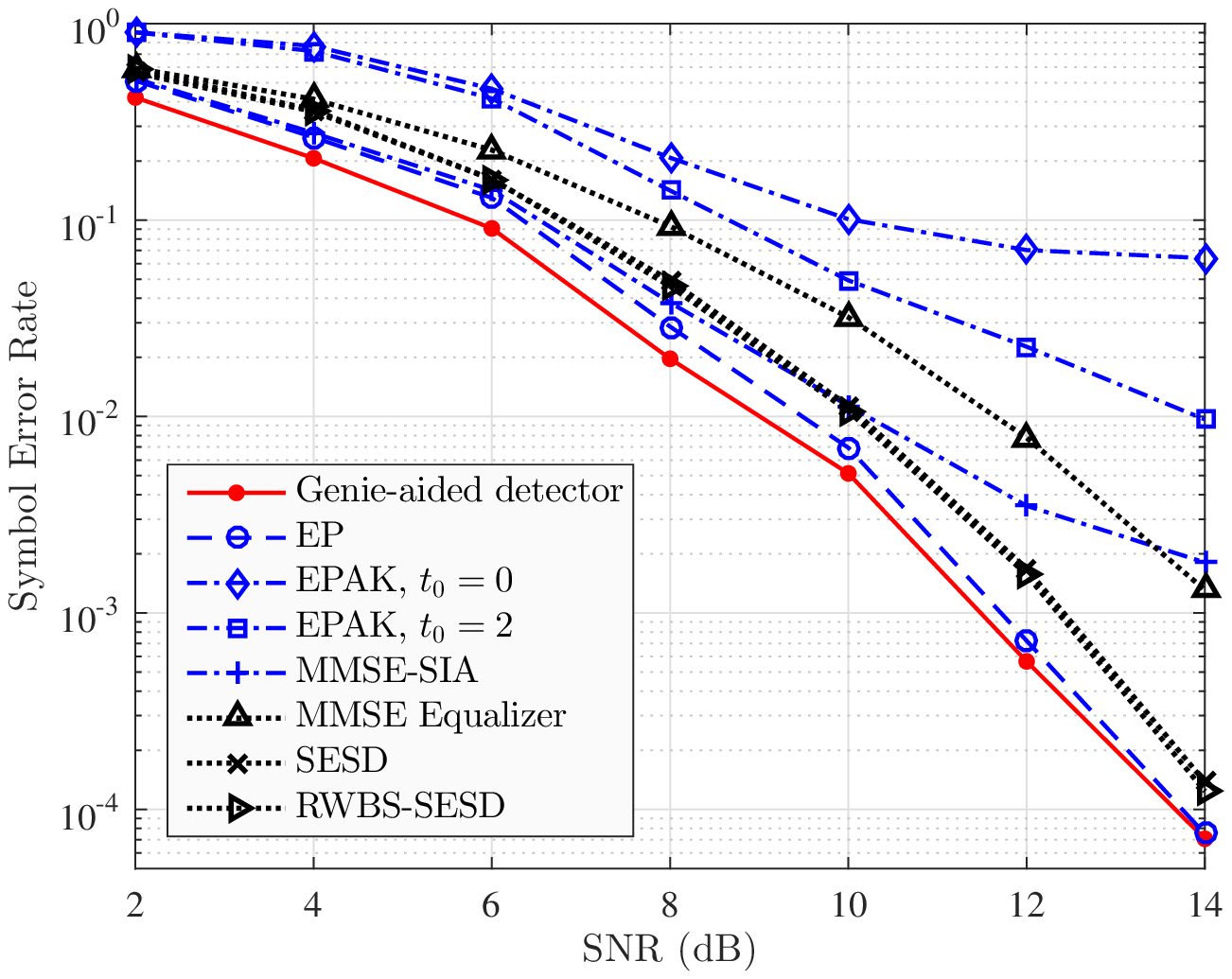} \label{fig:SER_T6K3_B9N8_uncorrelated}}
	\subfigure[Correlated fading, precoding-based constellations]{\includegraphics[width=.52\textwidth]{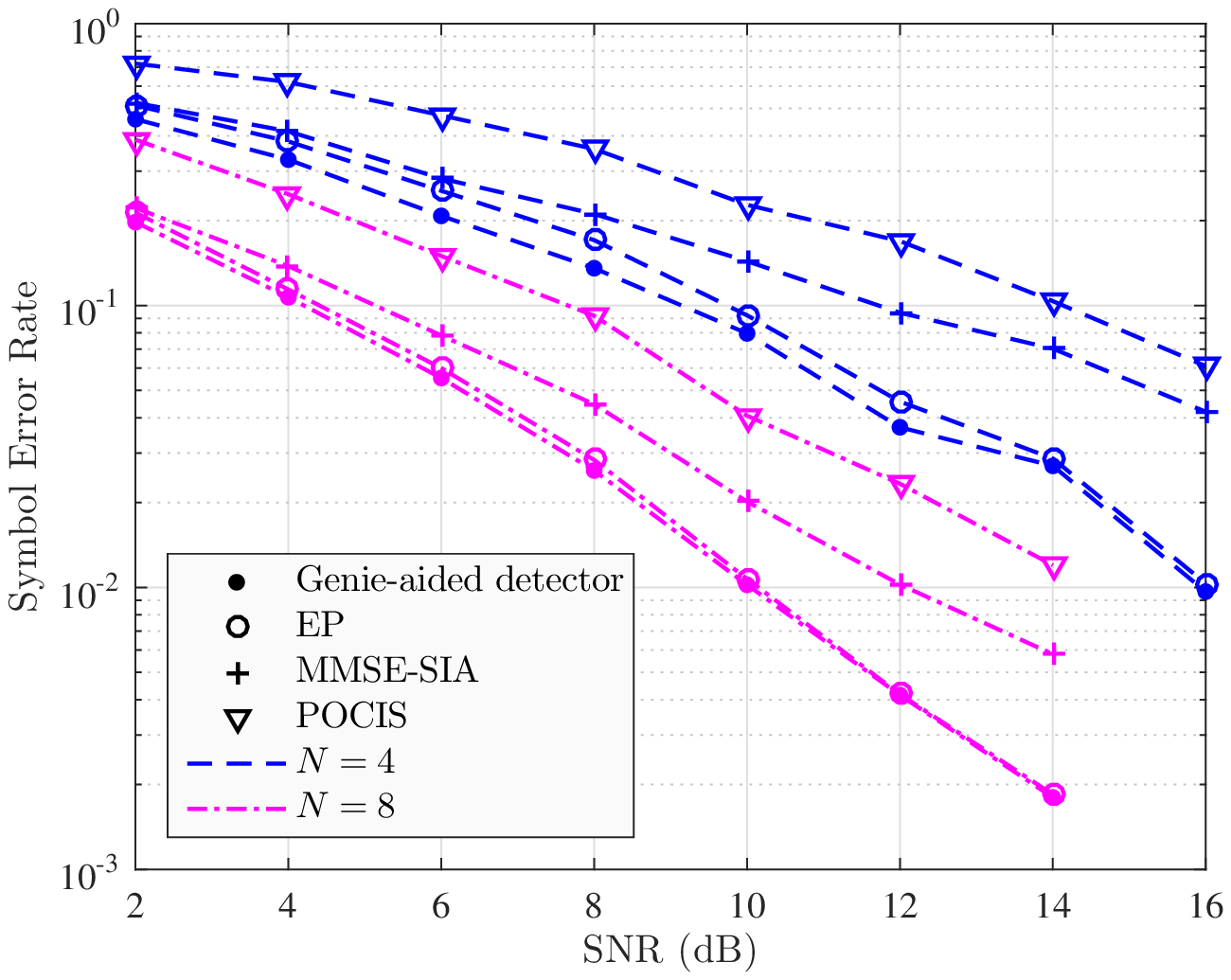} \label{fig:SER_T6K3_B9N8_correlated}}
	\caption{The symbol error rate of the system with EP, EPAK  ($t_0 \in \{0,2\}$), MMSE-\revise{SIA}, POCIS vs. the genie-aided detector for $T=6$, $K=3$, $N = 8$, and $B = 9$. For uncorrelated fading, these schemes are compared with three pilot-based detectors using respectively MMSE equalizer, sphere decoding~\cite{Schnorr1994lattice}, and joint channel estimation--data detection~\cite{Hanzo2008semiblind}.}
	\label{fig:SER_T6K3_B9N8}
\end{figure}

\subsection{Bit Error Rates with a Channel Code}
In this subsection, \revise{we use the output of the soft detectors for channel decoding. We consider the precoding-based Grassmannian constellations with the cube-split constellation for $\Dc$ since it admits an effective and simple binary labeling~\cite{Hoang_cubesplit_journal}.} We take the binary labels of the symbols in $\Dc$ for the corresponding symbols in $\Sc_k$. We integrate a standard symmetric parallel concatenated rate-$1/3$ turbo code~\cite{3GPP_TS36_212}. The turbo encoder accepts packets of $1008$
bits; the turbo decoder computes the bit-wise LLR from the soft outputs of the detection scheme \revise{as in \eqref{eq:LLR}} and performs 10 decoding iterations for each packet.

In Fig.~\ref{fig:BER_turboK}, we show the bit error rate~(BER) with this turbo code using $B = 8$ bits/symbol and different values of $T$ and $K=N$. EP achieves the closest performance to the genie-aided detector and the optimal detector~\eqref{eq:exact_marginalization}. The BER of MMSE-\revise{SIA} vanishes slower with the SNR than the other schemes, and becomes better than POCIS as $K$ and $N$ increase. The BER of EPAK with $t_0 = 2$ is higher than all other schemes. \revise{Under uncorrelated fading}, for $T=7$ and $K =N=4$, the power gain of EP w.r.t. MMSE-\revise{SIA}, POCIS, and EPAK for the same BER of $10^{-3}$ is about $3$~dB, $4$~dB, and $8$~dB, respectively. We also observe that the genie-aided detector gives very optimistic BER performance results compared to the optimal detector.
\begin{figure}[!h] 
	\centering
	\hspace{-.5cm}
	\subfigure[Uncorrelated fading]{\includegraphics[width=.52\textwidth]{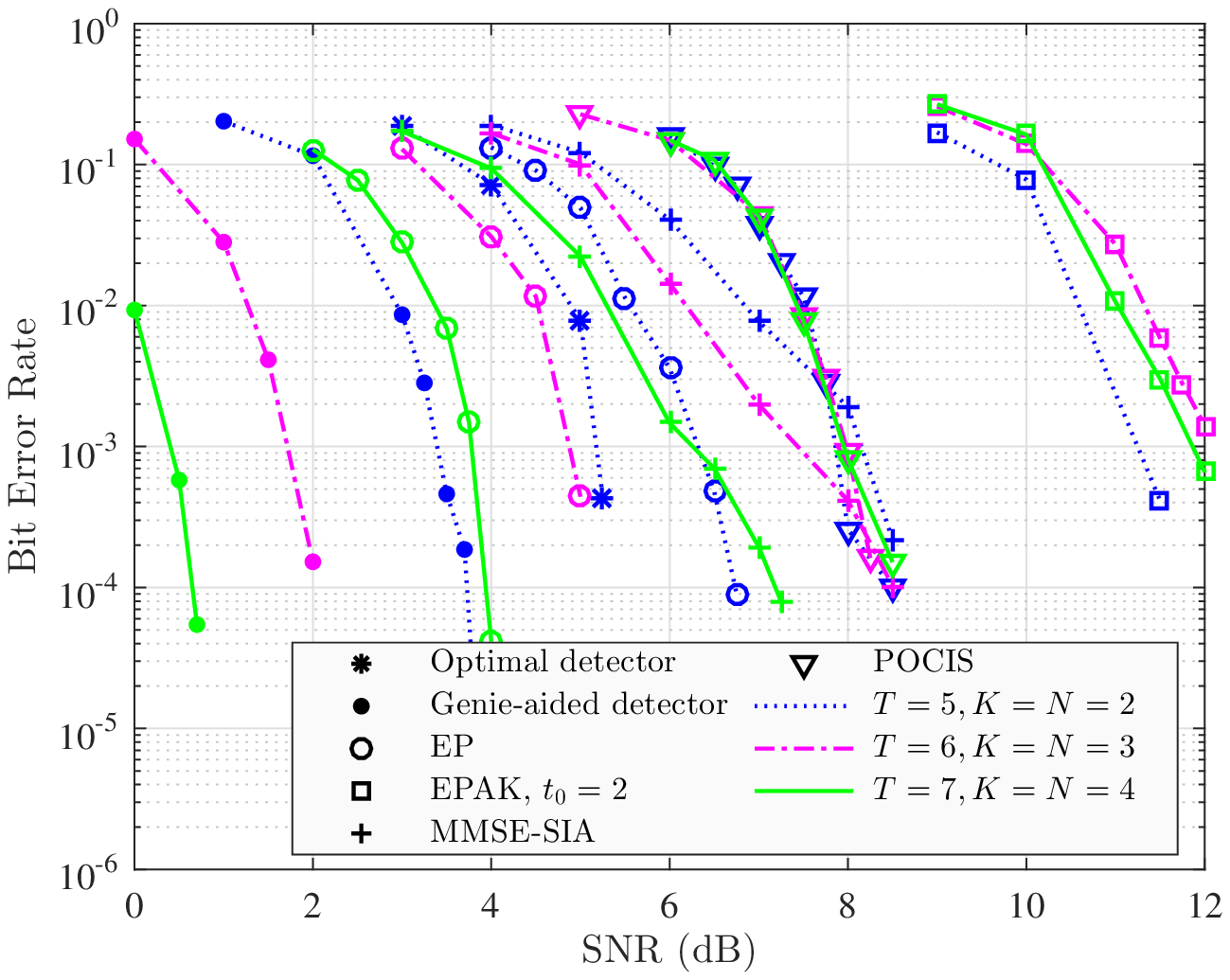}}
	\hspace{-.9cm}
	\subfigure[Correlated fading]{\includegraphics[width=.52\textwidth]{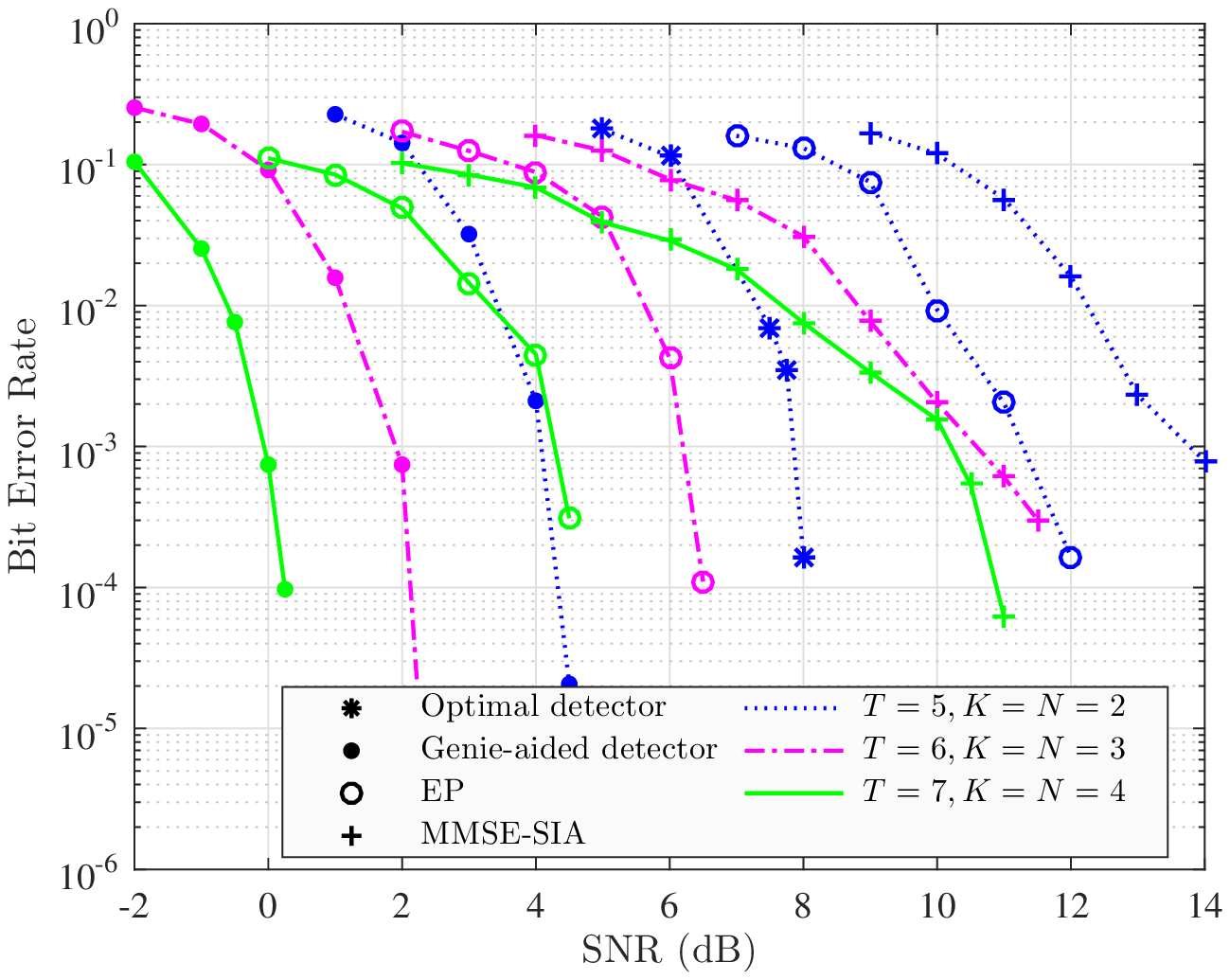}}
	\hspace{-.7cm}
	\caption{The bit error rate with turbo codes of EP, EPAK (with $t_0 = 2$), MMSE-\revise{SIA}, POCIS, and the optimal/genie-aided detector for $B = 8$ bits/symbol and $K = N$. 	}
	\label{fig:BER_turboK}
\end{figure}

Finally, in Fig.~\ref{fig:BER_turboB}, we consider $T=6$, $K=3$, $N =4$, and compare the BER with the same turbo code for different $B$. \revise{For $B = 5$, both EP and MMSE-SIA have performance close to the optimal detector. Under uncorrelated fading}, MMSE-\revise{SIA} can be slightly better than EP. This is due to the residual effect (after damping) of the phenomenon that all the mass of $\pi_{k1}\of{{i}_k}$ is concentrated on a possibly wrong symbol at early iterations, and EP may not be able to refine significantly the \revise{PMF} in the subsequent iterations if the constellation is sparse. This situation is not observed for $B = 8$, i.e., larger constellations. Also, as compared to the case $T = 6, K = 3, B = 8$ in Fig.~\ref{fig:BER_turboK}, the performance of MMSE-\revise{SIA} is significantly improved as the number of receive antennas increases from $N=3$ to $N=4$. As in the previous case, EPAK does not perform well.
\begin{figure}[!h] 
	\centering
	\subfigure[Uncorrelated fading]{\includegraphics[width=.51\textwidth]{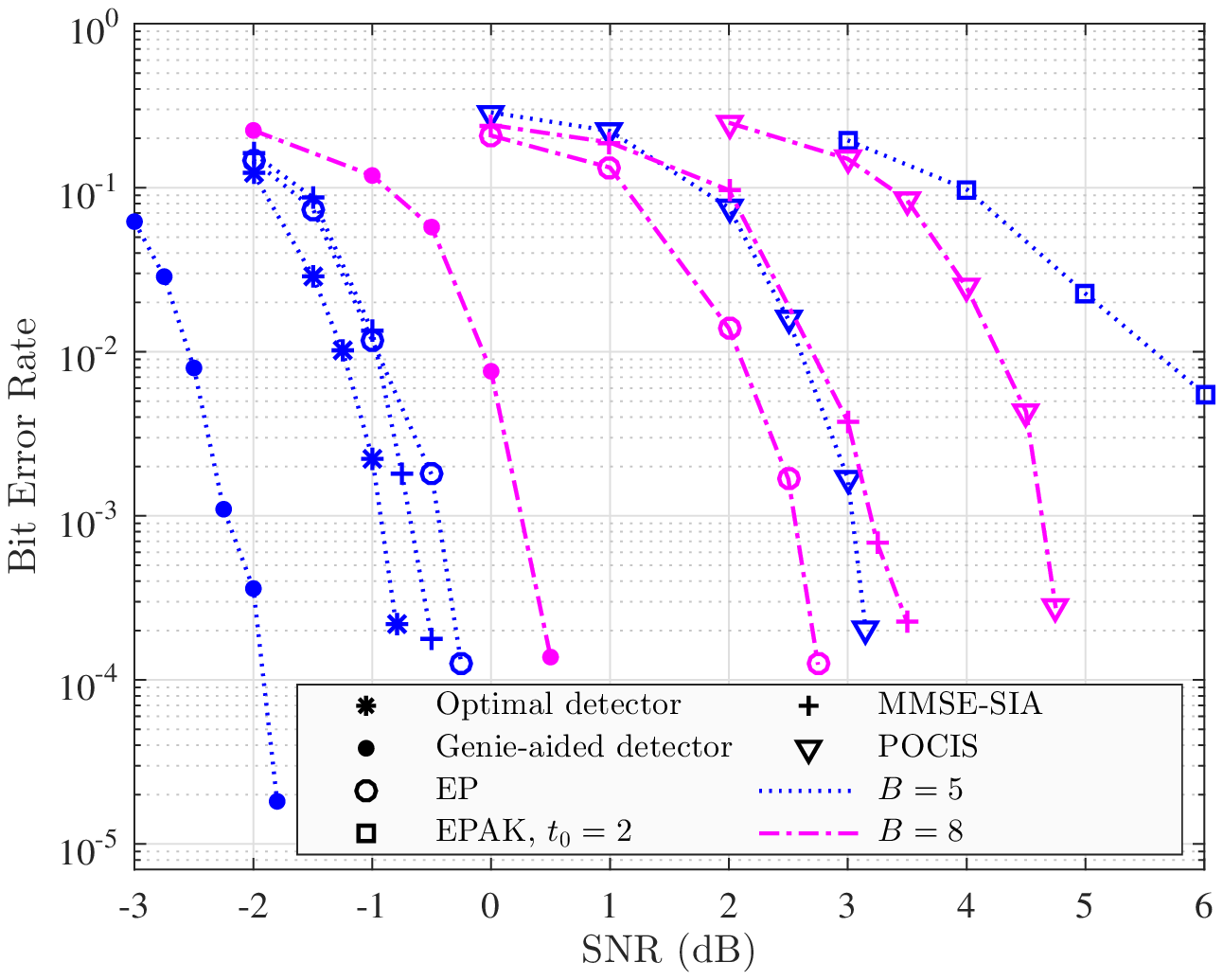}}
	\subfigure[Correlated fading]{\includegraphics[width=.51\textwidth]{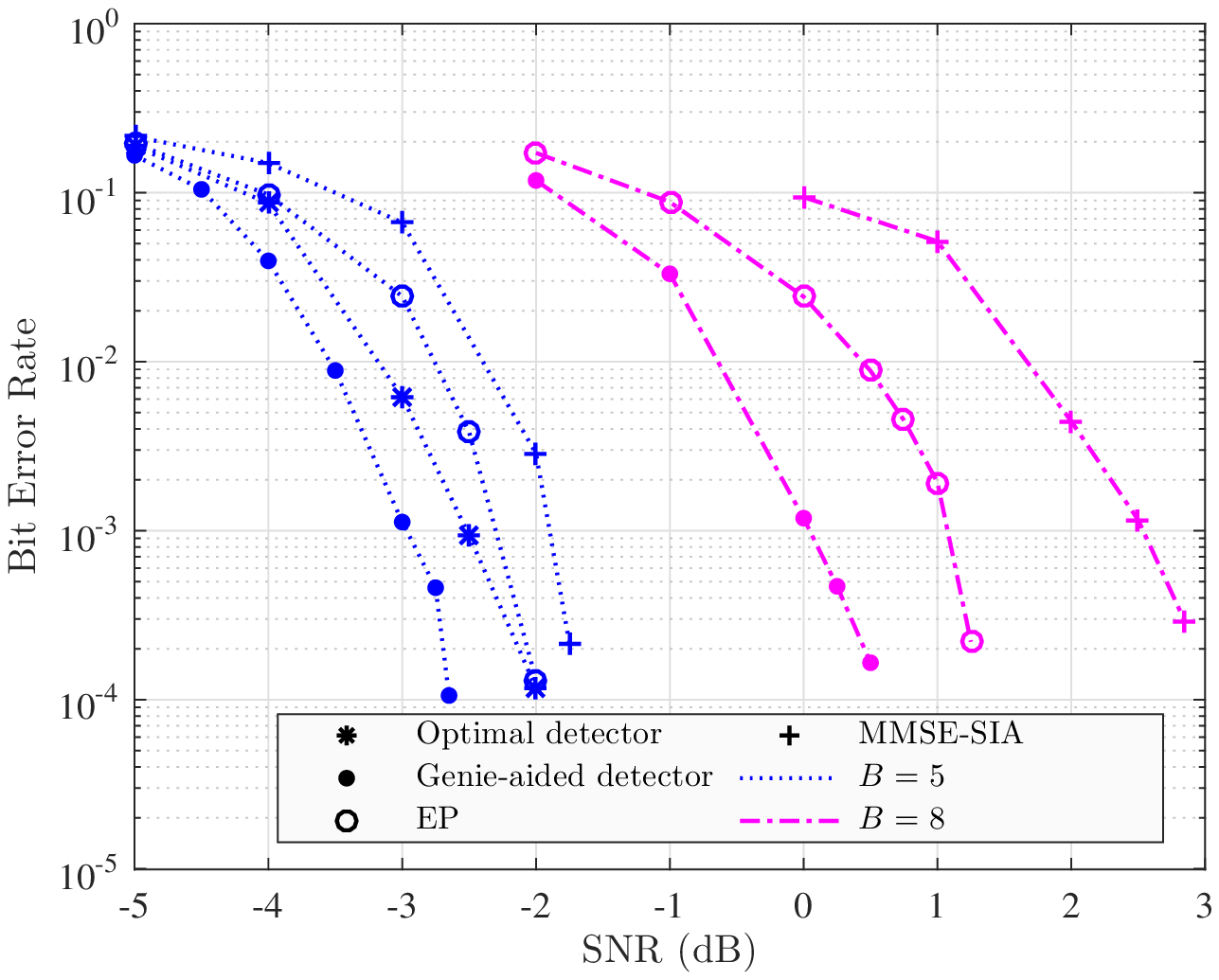}}
	\caption{The bit error rate with turbo codes of EP, EPAK (with $t_0 = 2$), MMSE-\revise{SIA}, POCIS, and the optimal/genie-aided detector for $T = 6$, $K =3$, and $N =4$.  	}
	\label{fig:BER_turboB} 
\end{figure}
\section{Conclusion} \label{sec:conclusion} 
We proposed an expectation propagation~(EP) based scheme and two simplifications (EPAK and MMSE-\revise{SIA}) of this scheme for multi-user detection in non-coherent SIMO multiple access channel \revise{with spatially correlated Rayleigh fading}. EP and MMSE-\revise{SIA} are shown to achieve good performance in terms of mismatched sum-rate, symbol error rate when they are used for hard detection, and bit error rate when they are used for soft-input soft-output channel decoding. EPAK \revise{has acceptable performance with uncorrelated fading}. It performs well for hard symbol detection but inadequately for soft-output detection. While MMSE-\revise{SIA} and EPAK have lower complexity than EP, the performance gain of EP with respect to MMSE-\revise{SIA} and EPAK is more significant when the number of users and/or the constellation size increase. \revise{Possible extensions of this work include considering more complicated fading models and analyzing theoretically the performance of EP for non-coherent reception.}

\appendix
\subsection{Properties of the Gaussian PDF} \label{app:gaussian_PDF}
\begin{lemma}\label{lemma:Gmult}
		Let $\rvVec{x}$ be an $n$-dimensional complex Gaussian vector. 
		It holds that
		\begin{enumerate}
			\item $\Nc(\xv;\muv,\Sigmam) = \Nc(\xv+\yv;\muv-\yv,\Sigmam)$ for $\yv \in \CC^n$;
			\item Gaussian PDF multiplication rule:
			$\Nc(\xv;\muv_1,\Sigmam_1)
			\Nc(\xv;\muv_2,\Sigmam_2)
			= \Nc(\xv;\muv\lnew,\Sigmam\lnew) 
			\Nc(\mathbf{0};\muv_1-\muv_2,\Sigmam_1+\Sigmam_2)$, 
			where $\Sigmam\lnew
			\defeq \big(\Sigmam_1^{-1}+\Sigmam_2^{-1}\big)^{-1}$ and $\muv\lnew
			\defeq \Sigmam\lnew
			\big(\Sigmam_1^{-1}\muv_1+\Sigmam_2^{-1}\muv_2\big)$.
	\end{enumerate}
\end{lemma}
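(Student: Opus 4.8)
The plan is to establish both identities by direct manipulation of the exponent and the normalizing constant of the complex Gaussian PDF given in its definition. \emph{Part (1)} follows by a straightforward substitution into that definition: the PDF depends on its argument and its mean only through their difference and is symmetric under interchanging them, while the prefactor $\det[\Sigmam]$ depends on neither; simplifying the substituted quadratic form therefore leaves the PDF invariant.

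\emph{Part (2)} is the substantive claim, and I would prove it by completing the square. First I would multiply the two PDFs and collect the combined exponent as a Hermitian quadratic in $\xv$: the quadratic coefficient is $-\xv^\H(\Sigmam_1^{-1}+\Sigmam_2^{-1})\xv=-\xv^\H\Sigmam\lnew^{-1}\xv$, and the linear part is built from $\Sigmam_1^{-1}\muv_1+\Sigmam_2^{-1}\muv_2=\Sigmam\lnew^{-1}\muv\lnew$, where the last equality is exactly the definition of $\muv\lnew$. Completing the square in $\xv$ then reproduces the exponent $-(\xv-\muv\lnew)^\H\Sigmam\lnew^{-1}(\xv-\muv\lnew)$ of $\Nc(\xv;\muv\lnew,\Sigmam\lnew)$, leaving the $\xv$-independent remainder
\[
R\defeq\muv\lnew^\H\Sigmam\lnew^{-1}\muv\lnew-\muv_1^\H\Sigmam_1^{-1}\muv_1-\muv_2^\H\Sigmam_2^{-1}\muv_2 .
\]

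The crux is to show that $R=-(\muv_1-\muv_2)^\H(\Sigmam_1+\Sigmam_2)^{-1}(\muv_1-\muv_2)$, i.e.\ the exponent of $\Nc(\mathbf{0};\muv_1-\muv_2,\Sigmam_1+\Sigmam_2)$. I would use the identity $\Sigmam\lnew=(\Sigmam_1^{-1}+\Sigmam_2^{-1})^{-1}=\Sigmam_2(\Sigmam_1+\Sigmam_2)^{-1}\Sigmam_1=\Sigmam_1(\Sigmam_1+\Sigmam_2)^{-1}\Sigmam_2$, which yields $\Sigmam\lnew\Sigmam_1^{-1}=\Sigmam_2(\Sigmam_1+\Sigmam_2)^{-1}$ and $\Sigmam\lnew\Sigmam_2^{-1}=\Sigmam_1(\Sigmam_1+\Sigmam_2)^{-1}$, hence $\muv\lnew=\Sigmam_2(\Sigmam_1+\Sigmam_2)^{-1}\muv_1+\Sigmam_1(\Sigmam_1+\Sigmam_2)^{-1}\muv_2$; substituting this into $R$ and simplifying collapses the self terms against the cross term and leaves precisely the claimed quadratic form. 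This bookkeeping with non-commuting Hermitian positive-definite $\Sigmam_1$ and $\Sigmam_2$ is the only delicate step and the main obstacle.

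Finally, to match the normalizing constants it suffices to verify $\det[\Sigmam_1]\,\det[\Sigmam_2]=\det[\Sigmam\lnew]\,\det[\Sigmam_1+\Sigmam_2]$, so that the $\pi^{-2n}$-scaled prefactors on the two sides agree. This follows from $\det[\Sigmam_1^{-1}+\Sigmam_2^{-1}]=\det[\Sigmam_1^{-1}(\Sigmam_1+\Sigmam_2)\Sigmam_2^{-1}]=\det[\Sigmam_1+\Sigmam_2]/(\det[\Sigmam_1]\,\det[\Sigmam_2])$ together with $\det[\Sigmam\lnew]=1/\det[\Sigmam_1^{-1}+\Sigmam_2^{-1}]$, which completes the argument.
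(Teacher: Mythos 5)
Your proof of part (2) is correct and complete: the complete-the-square computation, the identity $\Sigmam\lnew=\Sigmam_1(\Sigmam_1+\Sigmam_2)^{-1}\Sigmam_2=\Sigmam_2(\Sigmam_1+\Sigmam_2)^{-1}\Sigmam_1$ used to collapse the self and cross terms into $-(\muv_1-\muv_2)^\H(\Sigmam_1+\Sigmam_2)^{-1}(\muv_1-\muv_2)$, and the determinant identity $\det[\Sigmam_1]\det[\Sigmam_2]=\det[\Sigmam\lnew]\det[\Sigmam_1+\Sigmam_2]$ all check out. The paper does not carry out this computation at all: it simply asserts that the rule is a straightforward generalization of the real-valued counterpart and cites an external note on products of Gaussians. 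So you have written out in full the standard argument that the paper outsources; there is no difference in approach, only in the level of detail.

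Part (1) is where there is a genuine problem, and your one-line justification papers over it. Under the substitution $(\xv,\muv)\mapsto(\xv+\yv,\muv-\yv)$, the quantity ``argument minus mean'' becomes $\xv-\muv+2\yv$, which is neither $\xv-\muv$ nor its negative, so the principle you invoke (that the PDF depends on the two slots only through their difference, up to sign) does not apply, and the identity as printed is in fact false for general $\yv$: with $n=1$, $\Sigmam=1$, $\xv=\muv=0$, $\yv=1$ the left side is $1/\pi$ while the right side is $e^{-4}/\pi$. What your reasoning actually establishes is the translation invariance $\Nc(\xv;\muv,\Sigmam)=\Nc(\xv+\yv;\muv+\yv,\Sigmam)$ together with the symmetry $\Nc(\xv;\muv,\Sigmam)=\Nc(\muv;\xv,\Sigmam)$, and these are precisely the properties the paper later uses (e.g.\ to rewrite $\Nc(\yv;\zv_k+\tv,\Sigmam)$ as $\Nc(\zv_k;\yv-\tv,\Sigmam)$ before integrating). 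You should either restate part (1) in one of those corrected forms or explicitly flag the sign error; as it stands, carrying out the ``straightforward substitution'' you describe would refute the printed identity rather than prove it.
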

\begin{proof}\revise{ 
		The first part follows readily from the definition of $\Nc(\xv; \muv,\Sigmam)$. The complex Gaussian PDF multiplication rule is a straightforward generalization of the real counterpart~\cite{Bromiley2003products}. 
	}
\end{proof}
\subsection{Proof of Proposition~\ref{prop:pnew}} \label{proof:pnew}
Using the natural logarithm for the KL divergence, we derive
\begin{align}
	&D\big( q_\alpha(\xv) \big\| \underline{p}(\xv)\big) 
= \int q_\alpha(\xv) \ln \frac{q_\alpha(\xv)}{ \prod_\beta \underline{p}_\beta(\xv_\beta) } \dif\xv \\
&= \sum_\beta \int q_\alpha(\xv) \ln \frac{1}{\underline{p}_\beta(\xv_\beta) } \dif\xv + \const \\
&= \sum_{\beta\in\Nalpha} \!\int\! q_\alpha(\xv) \ln \frac{1}{\underline{p}_\beta(\xv_\beta) } \dif\xv  
+ \sum_{\beta\notin\Nalpha} \!\int\! q_\alpha(\xv) \ln \frac{1}{\underline{p}_\beta(\xv_\beta) } \dif\xv 
\nonumber \\ &\quad  + \const 
\end{align}
\begin{align}
&= \sum_{\beta\in\Nalpha}  \int  q_\alpha(\xv) \ln \frac{1}{\underline{p}_\beta(\xv_\beta) } \dif\xv 
\nonumber \\ &\quad 
+  \sum_{\beta\notin\Nalpha}  \int  \hat{p}_\beta(\xv_\beta) \ln \frac{1}{\underline{p}_\beta(\xv_\beta) } \dif\xv_\beta + \const \label{eq:tmp381} \\
&= - \!\sum_{\beta\in\Nalpha}  \!\int\!  q_\alpha(\xv) \left[\underline{\gammav}_\beta^\T\phiv(\xv_\beta)  \!-\!  A_\beta(\underline{\gammav}_\beta)\right]\dif\xv 
+\!  \sum_{\beta\notin\Nalpha}\!  D\big(\hat{p}_\beta\big\|\underline{p}_\beta\big)
\nonumber \\ &\quad 	+ \const \label{eq:tmp384}\\
&= \sum_{\beta\in\Nalpha} \left[ A_\beta(\underline{\gammav}_\beta) - \underline{\gammav}_\beta^\T \E_{q_\alpha}\big[\phiv(\xv_\beta)\big] \right]
+  \sum_{\beta\notin\Nalpha} D\big(\hat{p}_\beta\big\|\underline{p}_\beta\big)
+ \const,
\quad \label{eq:KLalf1}
\end{align}
where~\eqref{eq:tmp381} follows from 
$
q_{\alpha}(\xv) = \frac{\psi_\alpha(\xv_\alpha)}{m_\alpha(\xv_\alpha)}
\Big[\prod_{\beta\in\Nalpha} \hat{p}_\beta(\xv_\beta) \Big] 
\Big[\prod_{\beta\notin\Nalpha} \hat{p}_\beta(\xv_\beta) \Big] ,
$
and \eqref{eq:tmp384} follows from \eqref{eq:expfam}.
From \eqref{eq:KLalf1}, we can see that the optimization~\eqref{eq:pnew} of $\underline{p}$ decouples over $\underline{p}_\beta$, and the optimal distribution can be expressed as $\hat{p}\new_\alpha(\xv) = \prod_\beta \hat{p}\new_{\alpha,\beta}(\xv_\beta)$. 
For $ \beta\notin\Nalpha$, the minimum of $D\big(\hat{p}_\beta\big\|\underline{p}_\beta\big)$ is simply $0$ and achieved with $\hat{p}\new_{\alpha,\beta}(\xv_\beta)={\hat{p}}_\beta(\xv_\beta)$. 
For $\beta\in\Nalpha$, since the log-partition function $A_\beta(\underline{\gammav}_\beta)$ is convex in $\underline{\gammav}_\beta$ (see, e.g.,~\cite[Lemma~1]{Wainwright2005logpartition}), the minimum of $ A_\beta(\underline{\gammav}_\beta) - \underline{\gammav}_\beta^\T \E_{q_\alpha}\big[\phiv(\xv_\beta)\big]$ is achieved at the value of $\uvec{\gammav}_\beta$ where its gradient is zero. Using the well-known property of the log-partition function, $\nabla_{\gammav} A_\beta(\gammav)=\E_{\hat{p}_\beta}[\phiv_\beta(\gammav)]$, we get that the zero-gradient equation is equivalent to the moment matching criterion $\E_{\hat{p}\new_{\alpha,\beta}}[\phiv_\beta(\xv_\beta) = \E_{q_\alpha}[\phiv_\beta(\xv_\beta)]$. 

\bibliographystyle{IEEEtran}
\bibliography{IEEEabrv,./biblio}

\end{document}